\documentclass[11pt,a4paper]{article}
\usepackage{amsthm, amsmath, amssymb}
\usepackage{mdwlist}
\usepackage{comment}
\usepackage{xspace}
\usepackage{color}
\usepackage{graphicx}
\usepackage{caption}
\usepackage{subcaption}
\usepackage{url}

\usepackage[margin=1in]{geometry}

\newcommand{\spread}{\Delta}
\newcommand{\aspect}{\mathrm{aspect}}

\newcommand{\e}{\varepsilon}

\newcommand{\R}{\mathbb{R}}

\newcommand{\ir}{\lambda}

\newcommand{\dist}{\mathbf{d}}

\newcommand{\lfs}{\mathbf{f}}

\newcommand{\vor}{\mathrm{Vor}}
\newcommand{\del}{\mathrm{Del}}

\newcommand{\ball}{\mathrm{ball}}

\newcommand{\vol}{\mathrm{vol}}

\newcommand{\etaCAGE}{\eta\mathrm{-CAGE}}

\newcommand{\xnbor}{\mathcal{N}}

\DeclareMathOperator*{\argmin}{argmin}
\DeclareMathOperator*{\argmax}{argmax}

  \newcommand{\pseudocode}[1]{\textbf{#1}\xspace}
  \newcommand{\If}{\pseudocode{if}}
  \newcommand{\Then}{\pseudocode{then}}
  \newcommand{\Else}{\pseudocode{else}}
  \newcommand{\ElseIf}{\pseudocode{else if}}
  
  \newcommand{\ForEach}{\pseudocode{for each}}
  \newcommand{\For}{\pseudocode{for}}
  \newcommand{\To}{\pseudocode{to}}
  \newcommand{\While}{\pseudocode{while}}
  \newcommand{\Return}{\pseudocode{return}}  
  \newcommand{\tab}{\indent}

  \newcommand{\algoline}{%
    \noindent
    \rule{\linewidth}{.25pt}
  }

    \newcommand{\algo}[1]{\textsc{#1}\xspace}
    \newcommand{\WellSpacedPoints}{\algo{WellSpacedPoints}}
    \newcommand{\Insert}{\algo{Insert}}
    \newcommand{\RegularInsert}{\algo{RegularInsert}}
    \newcommand{\Snap}{\algo{Snap}}
    \newcommand{\NewLayer}{\algo{NewLayer}}
    \newcommand{\FindNN}{\algo{GreedyWalk}}
    \newcommand{\ApproxFarCorner}{\algo{ApproximateFarthestCorner}}
    \newcommand{\UpdateAspect}{\algo{UpdateAspect}}
    \newcommand{\PruneEdges}{\algo{PruneEdges}}
    \newcommand{\Delete}{\algo{Delete}}
    \newcommand{\Refine}{\algo{Refine}}
    \newcommand{\InsertEdge}{\algo{InsertEdge}}
    \newcommand{\DeleteEdge}{\algo{DeleteEdge}}
    \newcommand{\InsertVertex}{\algo{InsertVertex}}
    \newcommand{\DeleteVertex}{\algo{DeleteVertex}}
    \newcommand{\Nbr}{\algo{Nbr}}
    \newcommand{\ApproximateGreedyPermutation}{\algo{ApproximateGreedyPermutation}}
    \newcommand{\Qrefine}{Q_{\mathrm{refine}}}






    \newcommand{\NN}{\algo{NN}}
    

\newtheorem{theorem}{Theorem}
\newtheorem{lemma}[theorem]{Lemma}
\newtheorem{corollary}{Corollary}

\newcommand{\fullversion}[1]{#1}
\newcommand{\shortversion}[1]{}


\title{A Fast Algorithm for Well-Spaced Points and Approximate Delaunay Graphs}

\author{
  Gary L.~Miller\thanks{Department of Computer Science,
    Carnegie Mellon University, {\tt glmiller@cs.cmu.edu} Partially supported by the NSF grant CCF-1065106.}
    \and
  Donald R.~Sheehy\thanks{Inria Saclay \^{I}le-de-France, {\tt don.r.sheehy@gmail.com} Partially supported by the European project No.~255827 (CG-Learning)} 
    \and
  Ameya Velingker\thanks{Department of Computer Science,
    Carnegie Mellon University, {\tt avelingk@cs.cmu.edu} Partially supported by the NSF grant CCF-1065106.}
}

\begin{document}

  \maketitle
  \begin{abstract}
We present a new algorithm that produces a well-spaced superset of points conforming to a given input set in any dimension with guaranteed optimal output size.
We also provide an approximate Delaunay graph on the output points. 
Our algorithm runs in expected time $O(2^{O(d)}(n\log n + m))$, where $n$ is the input size, $m$ is the output point set size, and $d$ is the ambient dimension. 
The constants only depend on the desired element quality bounds.

To gain this new efficiency, the algorithm approximately maintains the Voronoi diagram of the
current set of points by storing a superset of the Delaunay neighbors of each point. By retaining
quality of the Voronoi diagram and avoiding the storage of the full Voronoi diagram, a simple
exponential dependence on $d$ is obtained in the running time. Thus, if one only wants the
approximate neighbors structure of a refined Delaunay mesh conforming to a set of input points, the
algorithm will return a size $2^{O(d)}m$ graph in $2^{O(d)}(n\log n + m)$ expected
time.
If $m$ is superlinear in $n$, then we can produce a hierarchically well-spaced superset of size $2^{O(d)}n$ in $2^{O(d)}n\log n$ expected time.

\end{abstract}


  \section{Introduction} 
\label{sec:intro}

  Delaunay meshes are used extensively in graphics and scientific computing, and more recently, they have been applied in higher dimensions for data analysis~\cite{cheng12delaunay, hudson10topological}.
  However, there are serious difficulties in constructing meshes in more than three dimensions.
  Perhaps the two most pressing problems are the combinatorial blowup in size complexity and the numerical instability of the predicates used in the construction.
  We present a new algorithm that alleviates both of these problems.
  It works by only storing an approximation to the Delaunay graph.
  This has the simultaneous effect of eliminating much of the complexity of the mesh itself while at the same time eliminating the costly and unstable predicate computations.
  It also introduces a new walk-based point location method that can locate the nearest neighbor of a point to be inserted in constant time after $2^{O(d)}n\log n$ time for preprocessing.
  
  The specific class of meshes we consider arises from Delaunay or Voronoi refinement.
  The input is a set of $n$ points $P\subset \R^d$ and the output is a \emph{well-spaced} superset $M$ of size $m$.
  Delaunay refinement meshes have an extensive theory providing guarantees including optimal runtime, optimal output size, conformity to input boundaries, and even the removal of a class of poorly-shaped simplices called \emph{slivers}.
  These methods are widely used in practice and form the basis for several popular meshing codes such as Triangle~\cite{shewchuk96triangle}, TetGen~\cite{siTetGen}, and the CGAL 3D Mesh Generation package~\cite{cgalMesh3D}.
  Still, these codes have not been extended to higher than three dimensions.
  
  Delaunay refinement is inherently incremental.
  At every point in time, the data structure represents a valid Delaunay triangulation.
  This makes several things easier, but the convenience comes at a cost.
  The numerical computation of linear predicates often gives unstable solutions requiring exact arithmetic or other machinations to resolve.
  A significant amount of research has gone into more robust and efficient predicates and yet, for Delaunay refinement, only a small fraction of these computations will determine the structure of the output triangulation.
  However, the intermediate state must be exact to guarantee that the output is correct.
  Similarly, representing all of the simplices is a waste as only a fraction of these simplices appear in the output.
  
  We propose to perform a kind of Voronoi refinement that stores only a sparse graph as its intermediate state.
  This graph is an approximation to the edges of the Delaunay triangulation.
  Two new challenges arise when working with the approximate Delaunay graph.
  The first challenge is that the circumcenters of Delaunay simplices are usually used to refine the mesh.
  Without simplices, one must look elsewhere for good places to insert Steiner points.
  The second challenge is that the intermediate meshes are often used as point location data structures for the input points that have not been inserted yet. 
  Without the clear decomposition of space given by the Delaunay triangulation or the Voronoi diagram, one must find some other way to organize the uninserted points geometrically.
  
  For Steiner points, we show how approximate linear programming can give approximate Delaunay circumcenters without ever finding a single Delaunay simplex.
  For point location, we show how preprocessing the points in a special ordering allows for a constant time per point scheme.
  Thus, after $2^{O(d)}n\log n$ preprocessing time, the algorithm will run in output-sensitive linear time.
  
  
\paragraph{Our Contribution} 
  We present a new algorithm that computes an asymptotically optimal well-spaced superset of a set of points $P$.
  It has all the quality guarantees achieved by Delaunay or Voronoi refinement, but never actually constructs the full Delaunay triangulation or Voronoi diagram.
  The expected running time is $2^{O(d)}(n\log n + m)$, and the output graph has size $2^{O(d)}m$.
  If a hierarchically well-spaced point set is sufficient, then the expected running time is $2^{O(d)}n\log n$, and the output size $2^{O(d)}n$.
  

\paragraph{Related work} 

  Boissonat et al.~\cite{boissonnat09incremental} proposed to store only the edges of the Delaunay triangulation as a way to make the construction more tractable in higher dimensions.
  Still, their approach requires enumerating the higher dimensional simplices locally when making updates and thus still requires some high dimensional predicates.
  In order to avoid this work, it seems necessary to relax the condition that the exact Delaunay graph be stored.
    

  Well-spaced points are a standard method in mesh generation.
  The use of Delaunay refinement to produce well-spaced points dates back to Chew~\cite{chew89guaranteed}.
  Ruppert~\cite{ruppert95delaunay} developed the optimality theory for such meshes and the generalization of his quality criteria was formalized by Miller et al.~\cite{miller99radius}.
  The recent book by Cheng et al.~\cite{cheng12delaunay} gives a thorough treatment of Delaunay meshing in $\R^3$.

  The Sparse Voronoi Refinement algorithm of Hudson et al.~\cite{hudson06sparse} changed the perspective from Delaunay refinement to Voronoi refinement and was able to prove the first near-optimal output-sensitive algorithm with a running time of $O_d(n\log \spread + m)$.
  Here, $O_d$ hides constant factors that depend only on $d$.
  The term depending on the \emph{spread} $\spread$ was removed by Miller et al.~\cite{miller11beating}, who devised the NetMesh algorithm, an optimal comparison-based algorithm running in $O_d(n\log n + m)$ time.
  For $d$-dimensional quality meshes that include all simplices, these constants are necessarily superexponential in $d$.
  We avoid these superexponential constants by only storing a graph.



  \section{Background} 

  \paragraph{Geometry Basics} 
    We assume throughout that all points lie in $d$-dimensional Euclidean space.
    We identify the points of this space with vectors in $\R^d$ and use $\dist(u,v)$ to denote the Euclidean distance between $u$ and $v$.
    Similarly, we write $ \dist(u,S)$ for a point $u$ and a compact set $S$ to denote the minimum distance from $u$ to a point of $S$.  
    The \textbf{nearest neighbor} of $u$ in $S$ is 
    \[
      \NN(u,S) := \argmin_{v\in S\setminus\{u\}}  \dist(u,v).
    \]
    If the nearest neighbor is not unique, we choose $\NN(u,S)$ arbitrarily among the possibilities.
    When the underlying set $S$ is clear from context, we will often simply write $\NN(u)$.
    

  \paragraph{Voronoi Diagrams and Delaunay Triangulations} 

    The \textbf{Voronoi diagram}, $\vor_P$, of a set of points $P$ in $\R^d$ is a decomposition of $\R^d$ into polyhedral cells called Voronoi cells, where each cell is the set of reverse nearest neighbors of a point of $P$.  That is,
      $\vor_P(p) = \{x\in \R^d :  \dist(x,p) =  \dist(x,P)\}$. 
    These cells are closed and form a cell complex, i.e.\ the intersection of two cells is another polytope of one less dimension.
    The cells of all dimensions of the Voronoi diagram are called the \textbf{faces}.
    The dual of the Voronoi diagram is the \textbf{Delaunay triangulation}, $\del_P$.
    It has a $k$-simplex for every $(d-k)$-face of the Voronoi diagram.
    Moreover, the center of the ball circumscribing any $d$-simplex of $\del_P$ is a vertex of $\vor_P$. 

    The total number of faces of the Voronoi diagram, or dually, the number of simplices of the Delaunay triangulation can be as large as $\Theta(n^{\lceil d/2\rceil})$.
    However, if the points are more evenly spaced, the complexity drops to $\Theta(n)$.
    One view of Voronoi- or Delaunay-based mesh generation algorithms is that they seek to find such an arrangement of points for which this complexity is as small as possible.
    However, even in such instances, the dependence on the dimension is quite bad.
    In particular, the complexity is dominated by the higher dimensional simplices.
    This motivates our approach of storing only the edges of $\del_P$.

  
  \paragraph{Mesh Generation and Well-Spaced Points} 
    The \textbf{inradius} of a Voronoi cell $\vor_P(u)$ is the radius of the \textbf{inball}, the largest ball centered at $u$ that is contained in $\vor_P(u)$.
    It is denoted $r(u)$ and is equal to half the distance from $u$ to its nearest neighbor in $P$.
    The \textbf{outradius} of $\vor_P(u)$, denoted $R(u)$, is the radius of the smallest ball that contains all of the vertices of $\vor_P(u)$.
    The aspect ratio of $\vor_P(u)$ is defined as $\aspect_P(u) = R(u)/r(u)$.
    We say that $M$ is \textbf{$\tau$-well-spaced} if for all $u\in M$, $\aspect_M(u) \le \tau$.
    In such cases, the constant $\tau$ is sometimes called the \textbf{quality} of the mesh $\del_M$.
    
    For some point sets $P$, any well-spaced superset $M$ of $P$ will be prohibitively large.
    To combat this problem, Miller et al.\ defined \textbf{hierarchically well-spaced} points~\cite{miller11beating}.
    A set $M$ is hierarchically $\tau$-well-spaced if there is a rooted tree $T$ such that following hold.
    \begin{enumerate*}
      \item The nodes $V$ of $T$ are subsets of $M$.
      \item Each set in $V$ is $\tau$-well-spaced. 
      \item For any two $A,B\in V$, $A\cap B$ is at most a single point.
      If $A$ is the parent of $B$ then $A\cap B$ is a single point.
      If $A\cap B$ is a point then that point is also in every set in the path from $A$ to $B$ in $T$.
      \item If $A$ is the parent of $B$ in $T$ and $p = A\cap B$ then $B\subset \ball(p, \e\dist(v,A\setminus \{p\}))$ for some small constant $\e$ that will depend on the details of the algorithm.
    \end{enumerate*}
    Hierarchically well-spaced points provide a Delaunay refinement analogue of compressed quadtrees.
    Most importantly, for any $P\subset \R^d$, there exists a hierarchically well-spaced superset of size $2^{O(d)}n$~\cite{miller11beating}.
    For a wide class of reasonable inputs, the hierarchy is not necessary, in which case the output hierarchy will have a single set of linear size.
    However, for some inputs, no linear-size (non-hierarchical) well-spaced superset is possible~\cite{sheehy12new}.
    

  \paragraph{The Feature Size} 
  \label{par:the_feature_size}
  
    For any set $S$, $\lfs_S$ maps a point $x\in \R^d$ to the distance to the second nearest point to $x$ in $S$.
    For example, if $x\in S$ then $\lfs_S(x) = \dist(x, \NN(x))$ since $x$ itself is the nearest point in $S$.
    This $1$-Lipschitz function is known as the \textbf{feature size} with respect to $S$.
    Many standard guarantees of meshing algorithms are derived from looking at the feature size function.
    For example, to show that an algorithm terminates with an optimal size output, it suffices to show that $\lfs_P(v)\le K \lfs_M(v)$ for each $v\in M$, where $P$ is the input and $M$ is the output.


  \paragraph{Meshing Guarantees} 
  
  
    There are several desirable guarantees for an algorithm that produces a well-spaced superset $M$ of an input set $P$:
    \begin{itemize*}
      \item \textbf{Quality Output:} $\aspect_M(v)\le \tau$ for all $v\in M$.
      \item \textbf{Local Sizing:} $\lfs_P(v)\le K \lfs_M(v)$ for all $v\in M$ and a constant $K$ that depends only on the quality $\tau$.
      \item \textbf{Size Optimality:} $|M| = 2^{O(d)}|M'|$ for any other $\tau$-well-spaced superset $M'$ of $P$.
      \item \textbf{Running Time:} $O_d(n\log n + |M|)$.
    \end{itemize*}
    In this paper, we give an algorithm that achieves all of these guarantees.
    Moreover, the constants in the running time are at most $2^{O(d)}$.
  
  
  \paragraph{Delaunay and Voronoi Refinement} 
  
    The Delaunay refinement paradigm for mesh generation is an incremental algorithm that improves mesh quality by iteratively adding a Steiner point at the circumcenter of a simplex whose circumradius to shortest edge ratio is greater than some constant.
    The dual approach, known as Voronoi refinement, instead adds the vertex $u$ of $\vor(v)$ farthest from $v$ whenever $\vor(v)$ has aspect greater than some constant $\tau$.
    To avoid confusion with the input vertices, we call such a point $u$ the \textbf{farthest corner} of the Voronoi cell $\vor(v)$.
    Recall that vertices of the Voronoi diagram are circumcenters of Delaunay simplices, but the quality conditions are slightly different.
    Although asymptotically the same, Voronoi refinement will produce stronger guarantees on the degree of the Delaunay graph and the edge length discrepancy of edges incident to a vertex at the cost of doing more refinement.

    The key idea in both methods is to perform only local operations.
    Then, the structure of the current Delaunay triangulation or Voronoi diagram is used to find large empty regions to refine.
    
 
  \paragraph{Samples, Nets, and Cages} 
    A subset $S\subset P$ is an \textbf{$(\e,\delta)$-sample} of $P$ if all points of $P$ are within distance $\e$ of a point of $S$ and no pair of points of $S$ is separated by a distance less than $\delta$.
    Such samples are also known as Delone sets~\cite{clarkson06building}.
    The special case of $\e=\delta$ is sometimes referred to as a (metric) $\epsilon$-net.
    Note that such samples are not the same as well-spaced points as they are restricted to a subset and also cannot adapt their density locally as well-spaced points do. 
  
    In several places in the algorithm we need a net on the $(d-1)$-sphere $S_{d-1}$.
    We will refer to these special nets as \emph{cages} to avoid confusion with range space nets.
    We state the following folklore result (see Matousek~\cite{matousek02lectures} for a proof).
    \begin{lemma}\label{lem:nets_exist}
      For any $\eta \in (0,1]$ there exists a subset of points $C \subset S_{d-1}$ called an \textbf{$\eta$-cage} such that for all $x \in S_{d-1}$, $ \dist(x,C) \leq \eta$ and $|C| \leq O((4/\eta)^d$).
    \end{lemma}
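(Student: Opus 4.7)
The plan is to construct $C$ greedily as a maximal $\eta$-separated subset of $S_{d-1}$ and bound its size by a standard volume/packing argument.

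First, I would define $C$ iteratively: start with any single point of $S_{d-1}$; while there exists $x \in S_{d-1}$ with $\dist(x, C) > \eta$, add such an $x$ to $C$. By maximality (once the process terminates), every $x \in S_{d-1}$ satisfies $\dist(x, C) \le \eta$, which gives the covering property. Termination will follow from the size bound to be proved next.

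Next, I would establish the size bound by a packing argument. By construction, any two distinct points $p, q \in C$ satisfy $\dist(p, q) > \eta$, so the open Euclidean balls $\ball(p, \eta/2)$ for $p \in C$ are pairwise disjoint. Since each $p \in C$ lies on $S_{d-1}$, we have $\ball(p, \eta/2) \subset \ball(0, 1 + \eta/2)$, and because $\eta \le 1$ we get $1 + \eta/2 \le 3/2$. Letting $V_d$ denote the volume of the Euclidean unit ball in $\R^d$, comparing volumes gives
\[
|C| \cdot V_d \cdot (\eta/2)^d \;=\; \sum_{p \in C} \vol(\ball(p,\eta/2)) \;\le\; \vol(\ball(0, 3/2)) \;=\; V_d \cdot (3/2)^d,
\]
so $|C| \le (3/\eta)^d \le (4/\eta)^d$, which both proves the size bound and guarantees termination of the greedy procedure.

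There is no real obstacle here; the only mild subtlety is choosing the strict inequality $\dist(x,C) > \eta$ as the condition for adding a new point, which simultaneously yields the weak covering bound $\dist(x,C) \le \eta$ required in the statement and the strict $\eta$-separation needed to make the balls $\ball(p, \eta/2)$ disjoint in the packing bound.
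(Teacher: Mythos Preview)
Your argument is correct and is exactly the standard greedy-packing proof of this folklore bound. The paper does not give its own proof of this lemma; it simply states the result and defers to Matou\v{s}ek's \emph{Lectures on Discrete Geometry}, where the same maximal-separated-set construction and volume comparison appear. So there is nothing to compare: your proof is the expected one.
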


    We will assume that we have precomputed a fixed $\eta$-cage for sufficiently small $\eta$ and refer to it as $\etaCAGE$.
    Since only one such cage is needed, we assume it is hardcoded into the algorithm.



  \paragraph{Permutations and Greed} 

    A permutation on a set of points induces an order relation.
    For a set $S$ with an order relation $\prec$ and a point $v\in S$, the \textbf{predecessor radius} of $v$ with respect to $S$ is defined as
        $\ir(v,S) := \min_{u\prec v}  \dist(u,v)$.
    It is often convenient to express an ordering on a set by labeling its points.
    For example, we write a permutation on $P$ as $p_1,\ldots,p_n$.
    The \textbf{prefixes} of this ordering are the sets $P_i :=\{p_1,\ldots, p_i\}$. 
    A \textbf{predecessor pairing} $\phi$ maps each point $p_i$ to $\phi(p_i) = \NN(p_i,P_{i-1})$ for $i\ge 2$.
    
    The \textbf{greedy permutation} of a set of points $P$ is an ordering that starts with any point $p_1$ and then satisfies $p_i = \argmax_{p\in P} \dist(p,P_{i-1})$.
    This is also sometimes known as a farthest point sampling strategy.
    
    A good approximation to the greedy permutation as well as a predecessor pairing can be computed in $2^{O(d)}n\log n$ time using an algorithm of Har-Peled and Mendel~\cite{har-peled06fast}.
    It is approximate in the sense that each point is approximately the farthest next point up to a factor of $1-n^{-\alpha}$ for some $\alpha>0$.
    To simplify the exposition, we assume that we compute a true greedy permutation.
    An approximation will only affect some constant factors.
    
  
  \paragraph{Approximate Linear Programming} 
    
    Our algorithm will employ linear programming (LP) using, for instance,
    the ellipsoid method  to find large empty regions in the point set
    corresponding roughly to the farthest corners of poor quality
    cells.  The running time of the ellipsoid method for
    $d$ dimensions and $\ell$ constraints is
    $O(\ell\, \text{poly}(d) \log\frac{1}{\epsilon})$, where $\epsilon$ is the error
    bound~\cite{GroetschelLovaszSchrijver1993}.  For the linear programs we care about,
    it suffices to compute the answer to a fixed constant error.  Moreover, the
    number of constraints will always be $2^{O(d)}$.  Thus, the time
    required to solve such an LP is also $2^{O(d)}$.

    
  

  \section{The Algorithm} 
\label{sec:the_algorithm}

\subsection{Overview of the Algorithm} 
\label{sub:overview}

  The basic structure of the algorithm is as follows: First, one computes a greedy permutation of the input points using the algorithm of Har-Peled and Mendel~\cite{har-peled06fast}.
  The points are then inserted one at a time, yielding an incremental construction of a hierarchically well-spaced set of points. 
  Each time an input point is inserted near an existing Steiner vertex $q$, the vertex $q$ is removed in a process we call \textbf{snapping}.
  After each input point insertion, Steiner vertices are added as part of a refinement procedure to
  retain quality. 
  At each step in the algorithm, we maintain an approximate Delaunay graph as well as the inradius and approximate farthest corner of the inserted vertices.
  Like walk-based algorithms for Delaunay triangulation, there is no auxiliary data structure required for point location~\cite{green78computing,mucke96fast}, but using greedy permutations allows us to prove stronger guarantees.
  \fullversion{Pseudocode for these routines may be found in Appendix~\ref{sec:pseudocode}.}
  \shortversion{Pseudocode for all of these routines may be found in the full version~\cite{ARXIV_VERSION}.}


\subsection{Data Structures} 
\label{sub:data_structures}

  The algorithm must maintain a hierarchy of well-spaced points and a sparse graph on each point set.
  The hierarchy is represented as a tree $T$.
  The nodes of $T$ are identified with the point sets.
  Recall that in the definition of a hierarchically well-spaced point set, the individual sets may intersect at exactly one point, and when this happens, then one set is the ancestor of the other.
  
  The main data structure used in the algorithm is a sparse graph representing the \textbf{approximate Delaunay graph} of the point sets in the hierarchy.
  The neighbors of a point $p$ are denoted $\xnbor(p)$.
  The edges of the approximate Delaunay graph contain the true Delaunay edges of the points.
  The extra edges incident to a point $p$ are all \emph{nearly} Delaunay in that they are approximately the same length as the true Delaunay edges incident to $p$ among the points in the same set of the hierarchy.
  
  Rather than a single approximate Delaunay graph, we keep a hierarchy of them, but we treat it as a single disconnected graph.
  However, each point gives a unique handle into one of the components; it is the lowest node in the tree containing that point.
  Hence, there is no need to navigate this hierarchy during the algorithm as it is merely an organizational tool.

  Unlike the previous work on hierarchically well-spaced points, the hierarchy grows monotonically, so, there is no need to maintain the hierarchy under dynamic deletions~\cite{miller11beating}.
  As a post-process, the collection of approximate Delaunay graphs is combined to produce a single approximate Delaunay graph.
  
  Attached to each vertex of each set in the hierarchy is an inradius $r(p)$, an approximate outradius $R(p)$, and an approximate farthest corner.
  


\subsection{Pre-processing} 
\label{sub:preprocessing}
  The first step of the algorithm is to pre-process the input points by computing an \textbf{approximate greedy permutation} $p_1,\ldots ,p_n$ of $P$.
  This will be the order in which we insert the input points.
  Let $\phi$ be the corresponding predecessor pairing. 
  
  We start by placing the input points in a bounding domain. 
  This will ensure that we do not insert too many points, because otherwise, local refinement rules can generate Steiner points to fill all of $\R^d$.
  We will use a copy of $\etaCAGE$ as the bounding domain scaled to have a radius that is a constant times larger than the diameter of $P$.
  It is known that for a reasonable constants, this will contain the refinement (see~\cite[Ch. 3]{sheehy11mesh} for a detailed discussion).


\subsection{Point Location} 
  Point location is needed only to identify the nearest neighbor of a point we are trying to insert among those vertices that we have already inserted.
  For a point $p$, a \textbf{greedy walk} routine locates this nearest neighbor of $p$.
  We shoot a ray from the predecessor $\phi(p)$ to $p$ and keep track of the Voronoi cells that the ray passes through.
  Since we insert the points in the order given by the greedy permutation, $\phi(p)$ will have been inserted previously.
  Even though we don't store the full Voronoi diagram, the approximate Delaunay graph suffices to find the next Voronoi cell that the ray passes through and, thus, the next step in the walk.

  Let $v$ be the vertex of the current position of the walk.
  The walk begins at $v = \phi(p)$.
  For each step, compute the intersection of the line through $p$ and $\phi(p)$ with the bisecting hyperplane between $q$ and $v$ for each neighbor $q$ of $v$.
  Computing these intersection points only requires scalar products rather than the determinant predicates.
  Among those intersection points for which the ray is leaving $\vor(v)$, choose the intersection that is closest to $v$.
  This is the true intersection of the boundary of $\vor(v)$ with the ray and indicates the next cell
  in the walk. 
  The walk terminates when the intersection is not in $\overline{p\phi(p)}$, and the
  last vertex is $\NN(p)$. 
  As we show in Lemma~\ref{lem:greedy_walk}, each greedy walk only takes a constant number of steps.

  The intersection of the halfspaces bounded by orthogonal bisectors of the Delaunay edges incident to a vertex is precisely the Voronoi cell of the vertex.
  Consequently, the correctness of this algorithm only requires that the approximate Delaunay graph contains the true Delaunay graph as a subgraph.
  The extra edges cannot change the walk because their orthogonal bisector hyperplanes all lie strictly outside the Voronoi cell.

\subsection{Incremental Updates}  
  The insertion routine adds a new point $p$ to the current data structure.
  It first finds the nearest neighbor of $p$ among those points inserted thus far using the greedy walk procedure described above.
  We then find the neighbors of $p$ in the approximate Delaunay graph by searching locally in the graph starting from $\NN(p)$.
  For input points, insertion may cause a nearby Steiner point to be deleted or it may cause the creation of a new layer in the hierarchy.
  These steps are described in more detail below.

  \paragraph{Inserting Input Points}  
  Let $p$ be an input point to be inserted.
  Once we have performed point location and determined $q = \NN(p)$, we perform one of three actions,
  depending on $q$ and the local sizing constant $K$. 
  \begin{itemize*}
    \item If $q$ is another input point and $\dist(p,q) \ge \frac{1}{K}r(q)$, we insert $p$. 
    \item If $q$ is another input point and $\dist(p,q) < \frac{1}{K}r(q)$, we add a new layer to the hierarchy (see below) and add $p$ to the new layer. 
    \item If $q$ is a Steiner point that has been added during a refinement stage, we \emph{snap} $q$ to $p$. 
    That is, we delete $q$ and insert $p$. 
  \end{itemize*}  
  In all cases, we finish by updating the approximate Delaunay graph as described below.

  \paragraph{Adding a New Set to the Hierarchy} 
    A new set of the hierarchy is always added as a new leaf to the hierarchy tree.
    This happens only when attempting to add an input point $p$ that is too close to another input point $q$ compared to the feature size at $q$.
    In such cases, we replace the point $q$ with a copy of itself and move $q$ to the newly created set.
    We then add a new copy of $\etaCAGE$ centered at $q$ with radius $2\,\dist(p,q)$ and add $p$ and $q$ to the new level.
    This is a new connected component of the approximate Delaunay graph and starts as a complete graph.
    This roughly corresponds to restarting the algorithm with just the points $p$ and $q$, except there is no need to recompute the greedy permutation.

    If an input point has nearest predecessor $q$, then this refers to the copy of $q$ in the new set in the hierarchy.
    This convention makes it possible to ignore the hierarchy when doing point location as the greedy permutation guarantees that any changes near $q$ will happen at the smaller scale and thus in the newly created layer.

  \paragraph{Finding Approximate Delaunay Neighbors}  
    The insertion of $p$, along with the possible removal of $q = \NN(p)$ in the case of snapping, alters the current approximate Delaunay graph $G$. 
    To find the neighbors of $p$ in the updated graph, we do a breadth-first search from $q$ in $G$.
    The search is pruned at any vertex $v$ that cannot be a neighbor of $p$ because either $\dist(p,v)> 2R(v)$ or $\dist(p,v) > 2\tau'' r(p)$, where $\tau''$ is an upper bound on the aspect ratio of any vertex at any time in the algorithm.
    The number of neighbors found by this search is at most $2^{O(d)}$ (Lemma~\ref{lem:nrb_bound}).
    They are used to estimate the outradius $R(p)$ after inserting $p$ (as described below).
    The list of new neighbors is then limited to those vertices $v$ found in the search such that $\dist(p,v)\leq 2R(p)$, as $v$ cannot otherwise be a Delaunay neighbor of $p$.
    Only the neighbors of $p$ need to be updated by recomputing inradii, outradii, and approximate farthest corners.
    
    When deleting a point $p$, we update the approximate Delaunay graph by adding edges between all pairs $(u,v)$ of neighbors of $p$.
    Then, we recompute the approximate aspect ratios of the points and eliminate any edges for which $\dist(u,v)> 2\min\{R(u),R(v)\}$.

\subsection{Approximate Farthest Corner} 
  For each point $p$ whose approximate Delaunay neighborhood is updated, we update the approximate farthest corner of $\vor(p)$ as follows.

  Given a unit vector $c\in S^{d-1}$, we can find the corner of the Voronoi cell of $p$ that is farthest in the direction of $c$ by solving a linear program.
  The Voronoi cell of $p$ is precisely the set of points satisfying a linear constraint for each $q\in\xnbor(p)$ given by the hyperplane bisecting $\overline{pq}$. 
  Thus, the LP with these constraints and objective function given by the inner product with $c$ has a solution on the boundary of $\vor(p)$ that is extremal in the direction of $c$.

  If $c$ is chosen to be sufficiently close to the unit vector in the direction of the farthest corner $v$, then $v$ maximizes the objective function of our linear program. 
  However, the problem is that we do not have any \emph{a priori} knowledge of the optimal vector $c$. 
  Thus, we will solve the aforementioned problem for several choices of $c$. 
  In particular, we iterate $c$ over all points in $\etaCAGE$ in order to guarantee that some $c$ is close to the direction of $v$.

  To solve the LP, we can either use an interior point or ellipsoid method, as both would run in polynomial time in $d$ and the number of constraints. 
  In our case, the dimension is quite small while the number of constraints is exponential in the dimension.
  Thus, asymptotically we get better bounds using the ellipsoid method.
  Moreover, the aspect ratio bounds make it easy to find a sufficiently good starting ellipse. 
  We discuss a scheme based on the ellipsoid method for our problem.

  In the ellipsoid method, we need an ellipse that contains the Voronoi region, but in our application the aspect ratio of the Voronoi cells is at most a constant $\tau''$. Thus, for a given point $p$, $\ball(p, r(p)\tau'')$ contains the feasible region. Moreover, we shall only require an approximate solution, namely, a point $x$ for which $c^T x$ is within $(1-\epsilon)$ of the optimum. Thus, for some value of $t$, we add the additional constraint $c^T x \geq t$ to our LP. We then search for a feasible solution. The trick is to perform a binary search on $t\in [r(p), r(p)\tau'']$ and solve the LP feasibility for different values of $t$. Since we desire only an approximate solution within a $(1-\epsilon)$ factor, we obtain a lower bound on the volume of the feasible region and can bound the number of iterations of the ellipsoid method required.
  
  Each LP feasibility computation will require time $O(m+\mathrm{poly}(d))\log(\tau''/\epsilon)$, where $m$ is the number of constraints (in our case, $2^{O(d)}$). The binary search will add an additional multiplicative factor of $\log(\tau''/\epsilon)$. Moreover, repeating the procedure for all $c\in\etaCAGE$ will incur an additional $2^{O(d)}$ factor.
  
%

\subsection{Refinement} 
The insertion of a new input point may cause some Voronoi cells to have aspect ratio more than $\tau$. 
The refinement phase adds Steiner points in order to recover quality.
                                                                                          
Having computed the approximate farthest corner for all updated vertices, we have an approximate aspect ratio for each Voronoi cell.
We wish to keep a queue of all Voronoi cells whose aspect ratio is greater than $\tau$. Since we only know \emph{approximate} aspect ratios, we instead add to the queue all Voronoi cells for which our computed approximate aspect ratio is greater than $\tau(1-\epsilon)\left(1-\frac{\eta^2}{2}\right)$. This is because any Voronoi cell with a true aspect ratio of more than $\tau$ must necessarily satisfy this criterion, as is shown in Corollary \ref{cor:approxfarthest}.

The refinement stage consists of going through the queue of bad aspect ratio cells and inserting the approximate farthest corner of each as a Steiner vertex.
If a vertex is added to the queue but some other refinement causes it to no longer have bad aspect ratio by the time we process it, then we simply do nothing and move on to the next cell in the queue.
Each insertion requires that we update the approximate Delaunay graph exactly as we did for input points.
The point location cost for Steiner vertices is constant time per vertex because the nearest neighbor of the Steiner point is just the vertex whose cell had bad aspect ratio in the first place.

Refinement can cause new cells to have bad aspect ratio.
These are also added to the queue after each insertion.
Refinement continues until the queue of Voronoi cells is empty.


\subsection{Post-processing} 
\label{sub:post_processing}

  The hierarchy can be reconciled into a connected approximate Delaunay graph as a post-process in one of two ways.
  A simple, output-sensitive linear-time algorithm is known for extending a hierarchically well-spaced point set into a single well-spaced set~\cite{miller11beating}.
  The same algorithm can be adapted for our algorithm as it only requires the ability to refine Voronoi cells.
  However, since this operation could potentially add a superlinear number of Steiner points, we instead opt to complete the approximate Delaunay graph by adding edges only.
  
  Let $A,B$ be two nodes of the hierarchy tree such that $A$ is the parent of $B$.
  Let $v$ be the vertex shared by both $A$ and $B$ and let $G_A$ and $G_B$ be the approximate Delaunay graphs of $A$ and $B$ respectively.
  Let $C$ be the vertices in the boundary cage of level $B$ and let $D$ be the vertices adjacent to $v$ in $G_B$.
  We replace the nodes $A$ and $B$ with a new node whose vertex set is $A\cup B$ and its edge set contains the union of the edges in $G_A$ and $G_B$ as well as the complete bipartite graph on $(C,D)$.
  The new tree is the formed from the old one by contracting the edge from $A$ to $B$.
  Continuing this way allows us to flatten the entire hierarchy into a single, connected approximate Delaunay graph.
  


  \section{Algorithmic Invariants} 
\label{sec:invariants}

  The analysis follows closely that of previous work on Voronoi refinement meshing~\cite{hudson06sparse, miller11beating}.
  We will show that certain geometric properties are satisfied after each point is inserted by the algorithm.
  The \textbf{Feature Size Invariant} (Theorem~\ref{thm:fs_invariant}) guarantees that no two output points are too close to each other compared to the feature size of the input points.
  It is used to prove the termination of the algorithm and the asymptotic optimality of the output size (see~\cite[Ch. 3]{sheehy11mesh} for a general version of this fact).
  
  The other invariant we maintain is the \textbf{Quality Invariant} (Theorem~\ref{thm:quality_invariant}).
  It guarantees that the points in each set of the hierarchy are well-spaced.
  This is a critical requirement for a point set to be hierarchically well-spaced.
  It also guarantees, among other things, that the approximate Delaunay graph has constant bounded degree throughout the algorithm.
  
  These invariants will apply to the whole hierarchy of point sets.
  The sets in the hierarchy do not interact, because the approximate Delaunay graph has no edges between points in different sets.
  As a result, it suffices to prove each invariant individually, for each set in the hierarchy.
  
  At each step of the algorithm, there is an ordering of the inserted points.
  This ordering is mostly determined by the order in which the points were added.
  The only exception is that if an input point $p$ causes a nearby Steiner point $q$ to snap, then $p$ is given the index of $q$ in the updated ordering.

  Let $M_1,\ldots, M_N$ be the sequence of ordered point sets for each step of the algorithm.
  Similarly, let $P_i = M_i\cap P$ for each $i=1\ldots N$.
  Each step either adds an input point, adds a Steiner point, or snaps a Steiner point to an input point. 
  If the $i$th step is one of the first two cases, then $M_i$ is formed from $M_{i-1}$ by appending a new point to the end of the ordering.
  If there is a snap then $M_i$ is formed from $M_{i-1}$ by swapping one Steiner point for an input point.
  It is clear that for all $i<j$, $P_i \subseteq P_j$.

  The reader familiar with other analyses of Delaunay or Voronoi refinement schemes will notice a difference here in how we define the steps of the algorithm.
  In previous work, one considered a single ordered point set and reasoned about the prefixes of that ordering.
  This is also standard in analyses of other incremental geometric algorithms.
  In our case, we may undo some Steiner point insertion if it was too close to an input point, so we need a more nuanced analysis.
  The payoff is that we avoid the preemptive point location for Steiner points used in sparse Voronoi refinement~\cite{hudson06sparse}, and we get by without any extra point location data structures.

  
\subsection{The Feature Size Invariant} 
\label{sec:fs_invariant}

We want to show that at each step in the algorithm, the feature size with respect to the inserted points is larger than some constant times the feature size with respect to the input points.
Recall that the $\ir(v,M)$ is the predecessor radius of the point $v$ in the ordered set $M$.
Usually, $\ir(v,M_i) = \ir(v,M_j)$ for any $M_i$ and $M_j$ containing $v$.
However, for some points, snapping a Steiner point could cause the predecessor radius to change.
If it goes up, that only makes the feature size invariant easier to satisfy.
If it goes down, we need to carefully check that it does not go down by too much.
The following standard lemma shows that bounding the predecessor radius suffices to bound the feature size.

\begin{lemma}\label{lem:predecessor_radius_bound_implies_fs_bound}
  Let $M\subset \R^d$ be an ordered set with order relation $\prec$.
  Let $f:\R^d\to \R$ be any $1$-Lipschitz function and let $K>0$ be any constant.
  If for all $v\in M$, $f(v)\le (K-1)\ir(v,M)$, 
  then for all $v\in V$, $f(v)\le K\lfs_M(v)$. 
\end{lemma}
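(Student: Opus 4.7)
My plan is to give a direct two-case argument, pivoting on the nearest neighbor of $v$ in $M$. No induction is needed: the whole point of the predecessor radius $\ir(v,M)$ is that it anchors $f(v)$ to a point already on the board, so we can always trade a distance for a factor.

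Fix $v \in M$ and let $u = \NN(v, M)$, so that $\dist(u,v) = \lfs_M(v)$. Since $u$ and $v$ are distinct points of the ordered set $M$, exactly one of $u \prec v$ or $v \prec u$ holds, and I handle the two cases separately.

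In the first case, $u \prec v$, so $u$ is a predecessor of $v$; therefore $\ir(v,M) \le \dist(u,v) = \lfs_M(v)$. Plugging into the hypothesis gives
\[
f(v) \;\le\; (K-1)\,\ir(v,M) \;\le\; (K-1)\,\lfs_M(v) \;\le\; K\,\lfs_M(v).
\]
In the second case, $v \prec u$, so now $v$ is a predecessor of $u$, giving $\ir(u,M) \le \dist(u,v) = \lfs_M(v)$. Applying the hypothesis to $u$ and then the $1$-Lipschitz property of $f$ yields
\[
f(v) \;\le\; f(u) + \dist(u,v) \;\le\; (K-1)\,\ir(u,M) + \dist(u,v) \;\le\; (K-1)\,\lfs_M(v) + \lfs_M(v) \;=\; K\,\lfs_M(v).
\]

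The only potential subtlety is the very first element of the ordering, for which $\ir(v,M)$ is a minimum over the empty set; but such a $v$ is automatically handled by the second case, since its nearest neighbor must come later in the ordering. So the main (and only) ``obstacle'' is really just the bookkeeping trick of using the Lipschitz inequality to transfer the predecessor bound from $u$ back to $v$ when $v$ is inserted before its nearest neighbor, and this is what turns the $(K-1)$ factor against $\ir$ into a $K$ factor against $\lfs_M$.
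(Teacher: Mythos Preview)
Your proof is correct and follows essentially the same two-case argument as the paper: split on whether $u=\NN(v,M)$ precedes or follows $v$, and in the latter case use the Lipschitz bound to transfer the hypothesis from $u$ to $v$. Your remark on the first element of the ordering is a nice clarification the paper leaves implicit.
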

\begin{proof}
  Fix any $v\in M$ and let $u= \NN(v,M)$.
  If $u\prec v$ then $\lfs_M(v) =  \dist(u,v) = \ir(v,M)$, and so,
    $f(v) 
      \le (K-1)\ir(v,M)
      = (K-1) \lfs_M(v)
      < K \lfs_M(v)$.
  If $v\prec u$, then $\ir(u,M)\le \dist(u,v)$, and so,
    $f(v) 
      \le f(u) +  \dist(u,v)
      \le (K-1)\ir(u,M) +  \dist(u,v)
      \le K \dist(u,v)
      \le K\lfs_M(v)$.
\end{proof}

\begin{theorem}[The Feature Size Invariant]\label{thm:fs_invariant}
  For all $i = 1\ldots N$, and for all $v\in M_i$, 
    $\lfs_{P_i}(v)\le K\lfs_{M_i}(v)$,
  where $K = \frac{4\tau}{\tau-4}$.
\end{theorem}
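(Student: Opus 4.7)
The strategy is to apply Lemma~\ref{lem:predecessor_radius_bound_implies_fs_bound} with the $1$-Lipschitz function $f = \lfs_{P_i}$ and the stated constant $K$, which reduces the theorem to establishing the stronger predecessor-radius bound
\[
  \lfs_{P_i}(v) \le (K-1)\,\ir(v, M_i) \qquad \text{for every } v \in M_i.
\]
I would prove this by induction on $i$. The base case $i=1$ is vacuous. For the inductive step, I classify step $i$ according to whether the algorithm (a) inserts an input point with no snap, (b) inserts a Steiner point, or (c) snaps an input point onto a previously-placed Steiner point, and check the bound for every point whose predecessor radius or feature size is affected.

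In case (a), the nearest neighbor of the new input point $p$ in $M_{i-1}$ is itself an input point $q \in P_{i-1}$ (otherwise we would be in case (c)), so $\lfs_{P_i}(p) \le \dist(p,q) = \ir(p, M_i)$ and the bound is immediate. In case (b), the Steiner point $s$ is the approximate farthest corner of a cell $\vor(v)$ whose aspect ratio exceeds the refinement threshold, so $\dist(v,s)$ is a constant times $r(v)$, controlled from below by (essentially) $\tau\, r(v)$. Using $\ir(s, M_i) = \dist(v,s)$, the Lipschitz inequality $\lfs_{P_i}(s) \le \lfs_{P_{i-1}}(v) + \dist(v,s)$, and the inductive hypothesis at $v$ combined with $\ir(v, M_{i-1}) \le 2\,r(v)$, the claim collapses to a numerical inequality in $K$ and $\tau$ that is satisfied precisely when $K \ge \frac{4\tau}{\tau-4}$. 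For every unaffected $u \in M_{i-1}$, the ordering is monotone so $\ir(u, M_i) = \ir(u, M_{i-1})$ while $\lfs_{P_i}(u) \le \lfs_{P_{i-1}}(u)$, and the induction carries over verbatim.

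Case (c) is the delicate one: an input point $p$ replaces a Steiner point $q$ at the same position of the ordering, so I must recheck the bound at $p$ itself and at every later point whose predecessor set used to contain $q$. For $p$, the predecessors of $p$ in $M_i$ coincide with those of $q$ in $M_{i-1}$, so
\[
  \ir(p, M_i) \ge \ir(q, M_{i-1}) - \dist(p,q),
  \qquad
  \lfs_{P_i}(p) \le \lfs_{P_{i-1}}(q) + \dist(p,q),
\]
and substituting the inductive hypothesis at $q$ reduces the desired bound to a smallness condition on $\dist(p,q)$ relative to $\ir(q, M_{i-1})$. That condition follows from $q = \NN(p, M_{i-1})$ together with the Quality Invariant (Theorem~\ref{thm:quality_invariant}), which forces $r(q)$ to be comparable to $\ir(q, M_{i-1})$. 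For successors of $q$, swapping $p$ for $q$ shifts any predecessor distance by at most $\dist(p,q)$, and the same smallness lets the invariant survive.

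The main obstacle is the arithmetic of threading the single constant $K = \frac{4\tau}{\tau-4}$ through cases (b) and (c) at once: case (b) pins $K$ against the refinement threshold $\tau$, while case (c) requires that the snap be tight enough, as measured through the Quality Invariant, that the perturbation $\dist(p,q)$ does not erode the bound. Conceptually each individual case is a one-line calculation after the reductions above, but balancing them under the same $K$ is where the bookkeeping lives.
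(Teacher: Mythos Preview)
Your overall architecture---reduce via Lemma~\ref{lem:predecessor_radius_bound_implies_fs_bound} and induct on $i$ with a case split on the type of insertion---matches the paper. But the induction you propose, with the \emph{single} constant $K-1$ for every point, cannot close in case~(c), and your proposed rescue via the Quality Invariant does not save it.

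Here is the obstruction. In a snap, $q=\NN(p,M_{i-1})$, so $\dist(p,q)\le\ir(p,M_i)$ and hence by the triangle inequality $\ir(q,M_{i-1})\le 2\,\ir(p,M_i)$. That factor of~$2$ is tight. Feeding your single-constant hypothesis $\lfs_{P_{i-1}}(q)\le (K-1)\ir(q,M_{i-1})$ through the Lipschitz bound gives
\[
  \lfs_{P_i}(p)\le (K-1)\ir(q,M_{i-1})+\dist(p,q)\le \bigl(2(K-1)+1\bigr)\ir(p,M_i),
\]
and you would need $2(K-1)+1\le K-1$, which is impossible. The Quality Invariant does not help here: it only yields $\dist(p,q)\le R(q)\le\tau\, r(q)$, while $\ir(q,M_{i-1})\ge 2r(q)$, so the ratio $\dist(p,q)/\ir(q,M_{i-1})$ is bounded above by $\tau/2$, which is not small. (Also, your displayed inequality $\ir(p,M_i)\ge\ir(q,M_{i-1})-\dist(p,q)$ goes the wrong way for what you need; plugging it in literally gives $K\dist(p,q)\le 0$.) The same factor-of-two problem arises for any successor $u$ of $q$ whose predecessor radius drops.

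The paper resolves this by strengthening the inductive hypothesis to carry \emph{two} constants: a tighter bound $K_S=\frac{2K}{\tau}+1$ for Steiner points that have never been perturbed by a snap, and the looser $K_P=K-1$ for input points and for points already snap-affected (tracked via a set $X_i$). Case~(b) then yields the tight $K_S$ bound for the new Steiner point, and in case~(c) the factor-of-two loss converts $K_S$ into $2K_S+1=K_P$, which is exactly the relation that forces $K=\frac{4\tau}{\tau-4}$. Your claim that case~(b) alone pins $K$ at $\frac{4\tau}{\tau-4}$ is therefore not right: by itself it only needs $\frac{2K}{\tau}+1\le K-1$, i.e.\ $K\ge\frac{2\tau}{\tau-2}$; the sharper constant emerges only from the interaction with case~(c). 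Finally, a minor slip in your case~(b): the inequality ``$\ir(v,M_{i-1})\le 2r(v)$'' is backwards (one has $\ir\ge\lfs=2r$); the correct route is to apply Lemma~\ref{lem:predecessor_radius_bound_implies_fs_bound} to the inductive hypothesis and use $\lfs_{M_{i-1}}(v)=2r(v)$ directly.
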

\begin{proof}
  Lemma~\ref{lem:predecessor_radius_bound_implies_fs_bound} implies that it will suffice to bound the predecessor radius of each point in each ordering.
  Define the subset $X_i\subset M_i$ to be the points for which some previous snap caused a decrease in the predecessor radius, i.e.\ 
    $X_i:= \{v\in M_i : \ir(v,M_j) < \ir(v, M_{j-1}) \text{ for some } j\le i\}$.

  We will proceed by induction.
  The inductive hypothesis is that for all $k < i$ and all $v\in M_k$,
  \[
    \lfs_{P_i}(v) \le \left\{
      \begin{array}{ll}
        K_P\ir(v, M_k) & \text{if $v\in P_i\cup X_k$}\\
        K_S\ir(v, M_k) & \text{if $v\in M_k\setminus P_i\setminus X_k$}\\        
      \end{array}
    \right.
  \]
  where $K_S = \frac{\tau+4}{\tau-4}$ and $K_P = \frac{3\tau+4}{\tau-4}$.
  Note, this choice of $K_P$ and $K_S$ implies
  \begin{equation}\label{eq:fs_constant_bounds}
    K_P = K-1, ~ K_S = \frac{2K}{\tau} +1, \text{ and }  K_P = 2 K_S + 1.
  \end{equation}
  
  We will now prove that this hypothesis also holds for $M_i$.
  There are three cases to consider, a new Steiner point, a new input point whose nearest neighbor is also an input point, or a Steiner point snapped to an input point.
  
  \paragraph{Case 1: Adding a Steiner point} 
    If we are adding a Steiner point $v$, then the inductive hypothesis continues to hold for all points in $M_i$ other than $v$.  
    We need only show that it holds for $v$ as well.
    Let $x$ be the vertex whose cell had bad aspect ratio.
    Since $v$ is in the Voronoi cell of $x$ before insertion, $\ir(v, M_i) =  \dist(x,v)$.
    So, we may assume $2 \dist(x,v)/\lfs_{M_{i-1}}(x)> \tau$.
    Moreover, by induction and Lemma~\ref{lem:predecessor_radius_bound_implies_fs_bound}, $\lfs_{P_i}(x) \le K \lfs_{M_{i-1}}(x) = K \lfs_{M_i}(x)$, where the last equality follows because $v$ is not the nearest neighbor of $x$ in $M_i$.
    So, we have that 
      $\lfs_{P_i}(x) < 2 K\dist(x,v)/\tau$.
    Using these facts and the Lipschitz property of $\lfs_{P_i}$, we derive the following.
    \begin{align*}
      \lfs_{P_i}(v) 
        &\le \lfs_{P_i}(x) +  \dist(x,v)\\
        &< \left(\frac{2K}{\tau} + 1\right) \dist(x,v)
        = K_S\ir(v,M_i).
    \end{align*}

  
  \paragraph{Case 2: Adding an input point (no snapping)} 
  Since the only change in the ordering is to append the new vertex $v\in P_i$ to the end of the ordering, the inductive hypothesis will continue to hold for all points other than $v$ in $M_i$.
  Let $u = \NN(v, M_i)$.
  Since we did not snap, it must be that $u\in P_i$.
  So, $\lfs_{P_i}(v) \le  \dist(u,v) = \ir(v,M_i) \le K_P \ir(v,M_i)$.
  So, the hypothesis holds for $v$ as well.
  
  
  \paragraph{Case 3: Snapping a Steiner point to an input point} 
  
    Let $v$ be the new point added and let $x$ be the Steiner point that was removed.
    It will suffice to check that the inductive hypothesis holds for $v$ as well as any points for which the predecessor radius has been decreased by the snap.
    For such a point $u$, it must be that $x$ precedes $u$ in $M_{i-1}$ and so using the triangle inequality and that $ \dist(x,v)\le  \dist(u,v)$,
    \begin{equation}\label{eq:ir_halves}
      \ir(u,M_{i-1}) \le  \dist(x,u) \le 2 \dist(u,v) = 2\ir(u,M_i).
    \end{equation}
    
    \paragraph{Case 3(a): $u=v$} 
    
      By the definition of $x$, we have the $ \dist(x,v) \le \ir(v, M_i)$.
      So, by the triangle inequality, 
        $\ir(x, M_{i-1}) \le \ir(v, M_i) +  \dist(x,v) \le 2 \ir(v, M_i)$.
      If $x\in X_{i-1}$, then there is some point $y\in P_i$ such that $\ir(x,M_{i-1}) =  \dist(x,y)$.
      Using \eqref{eq:ir_halves}, we get the following bound.
      \[
        \lfs_{P_i}(v) \le  \dist(v,y) \le  \dist(v,x) +  \dist(x,y) \le 3\ir(v, M_i) \le K_P\ir(v, M_i).
      \]      
      Otherwise, if $x\notin X_{i-1}$, then by induction and \eqref{eq:ir_halves},
        $\lfs_{P_i}(x) \le K_S\ir(x,M_{i-1}) \le 2 K_S \ir(v,M_i)$.
      So, we get that
        $\lfs_{P_i}(v) \le \lfs_{P_i}(x) +  \dist(x,v) \le (2 K_S + 1)\ir(v,M_i) = K_P\ir(v,M_i)$.
    
    
    \paragraph{Case 3(b): $u\in P_i\setminus\{v\}$} 
        $\lfs_{P_i}(u) \le  \dist(u,v) = \ir(u, M_i) \le K_P\ir(u,M_i)$.
    
    \paragraph{Case 3(c): $u\in X_{i-1}\setminus P_i$} 
      The predecessor radius is determined by an input point for all points of $X_i$, so $\lfs_{P_i}(u) \le \ir(u,M_{i-1})$.
      Using \eqref{eq:ir_halves}, we have
        $\lfs_{P_i}(u) \le \ir(u,M_{i-1}) \le 2\ir(u,M_i) \le K_P\ir(u,M_i)$.

    \paragraph{Case 3(d): $u\in X_i \setminus X_{i-1}\setminus P_i$.} 
      By induction and \eqref{eq:ir_halves}, we have
        $\lfs_{P_i}(u)\le K_S\ir(u,M_{i-1}) \le 2 K_S\ir(u,M_i) < K_P \ir(u,M_i)$.
\end{proof}

There is one special case of the preceding theorem that is sufficiently useful that we state it below as its own lemma.
Its proof is the same as above, but only requires the first case.

\begin{lemma}\label{lem:steiners_preserve_fs}
  Let $M_i,\ldots, M_k$ be a contiguous sequence of ordered sets produced by the algorithm in which each was formed from the previous one by adding a Steiner point.
  Then, for all $j = i\ldots k$, and for all $v\in M_j$, 
    $\lfs_{M_i}(v)\le K\lfs_{M_j}(v)$,
  where $K = \frac{4\tau}{\tau-4}$.
\end{lemma}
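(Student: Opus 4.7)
The plan is to mimic the proof of Theorem~\ref{thm:fs_invariant} verbatim, but with $M_i$ playing the role that $P_i$ played there. The key simplification advertised in the statement is that the sequence $M_i,\ldots,M_k$ contains only Steiner insertions, so no snapping ever occurs and no input point is inserted. Consequently, throughout the sequence, the set of points whose predecessor radius was ever decreased by a snap (the set $X$ in the earlier proof) is empty, and so only Case~1 of that proof is relevant.

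Concretely, I would first observe that $\lfs_{M_i}$ is a $1$-Lipschitz function, so by Lemma~\ref{lem:predecessor_radius_bound_implies_fs_bound} it suffices to show the predecessor-radius bound
\[
  \lfs_{M_i}(v) \le (K-1)\,\ir(v, M_j)
\]
for all $j = i,\ldots,k$ and all $v \in M_j$, where $K-1 = K_P = \frac{3\tau+4}{\tau-4}$. I would then proceed by induction on $j$. For the base case $j = i$, every $v \in M_i$ satisfies $\lfs_{M_i}(v) \le \ir(v,M_i) \le (K-1)\ir(v,M_i)$ since the distance to any predecessor is at least the distance to the nearest point of $M_i \setminus \{v\}$.

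For the inductive step, suppose $M_j$ arises from $M_{j-1}$ by appending a single Steiner point $v$. For points of $M_{j-1}$, the predecessor radius is unchanged, so the inductive bound carries over unchanged. For the new point $v$, I would copy Case~1 of the proof of Theorem~\ref{thm:fs_invariant}: let $x$ be the vertex whose Voronoi cell triggered the refinement, so that $\ir(v,M_j) = \dist(x,v)$ and $2\dist(x,v)/\lfs_{M_{j-1}}(x) > \tau$. Applying Lemma~\ref{lem:predecessor_radius_bound_implies_fs_bound} to the inductive hypothesis at $x$ gives $\lfs_{M_i}(x) \le K\,\lfs_{M_{j-1}}(x) < (2K/\tau)\,\dist(x,v)$. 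Combining this with the $1$-Lipschitz property yields
\[
  \lfs_{M_i}(v) \le \lfs_{M_i}(x) + \dist(x,v) < \Bigl(\tfrac{2K}{\tau} + 1\Bigr)\dist(x,v) = K_S\,\ir(v,M_j) < K_P\,\ir(v,M_j),
\]
using the identities in~\eqref{eq:fs_constant_bounds}. This closes the induction.

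There is essentially no obstacle beyond verifying that the ``input'' role can be played by $M_i$: the argument in Case~1 of the theorem only used that $\lfs_{P_i}$ is $1$-Lipschitz and that $\lfs_{P_i}(x) \le K \lfs_{M_{j-1}}(x)$ at the cell $x$ being refined, both of which remain true when $P_i$ is replaced by $M_i$. The only thing to double-check is that no Steiner insertion between steps $i$ and $j$ can ever disturb the predecessor radii of previously inserted points (which is immediate from the definition of $\ir$), so the inductive bookkeeping is trivial.
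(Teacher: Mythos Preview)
Your proposal is correct and is exactly the approach the paper itself takes: the paper states that the proof of this lemma ``is the same as above, but only requires the first case,'' meaning one re-runs the argument of Theorem~\ref{thm:fs_invariant} with the fixed $1$-Lipschitz function $\lfs_{M_i}$ in place of $\lfs_{P_i}$ and invokes only Case~1 of the induction. Your write-up spells this out faithfully, including the reduction via Lemma~\ref{lem:predecessor_radius_bound_implies_fs_bound}, the trivial base case, and the observation that appending a Steiner point leaves all existing predecessor radii unchanged.
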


\subsection{The Quality Invariant} 
\label{sec:quality}

  We now show that at all times, i.e.\ for each $M_1,\ldots M_N$, the points in each set of the hierarchy are well-spaced for a quality constant that does not depend on the dimension.
  In particular, this means that each $M_i$ is indeed a hierarchically well-spaced point set.
  
  \begin{theorem}[The Quality Invariant]\label{thm:quality_invariant}
    There exists a constant $\tau''$ independent of $n$ and $d$, such that at all times, the set of inserted points is $\tau''$-well-spaced. 
  \end{theorem}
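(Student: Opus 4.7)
The plan is to proceed by induction on the step index $i$ in parallel with Theorem~\ref{thm:fs_invariant}, establishing that every cell of $M_i$ has aspect ratio at most $\tau''$ for a constant $\tau''$ just slightly larger than the refinement threshold $\tau$. The base case is the initial set of points placed inside the precomputed $\etaCAGE$ bounding domain, whose radius is a constant multiple of the diameter of $P$; the regularity of the cage makes the initial aspect ratio bound a function only of $\eta$.

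For the inductive step I would split on how $M_i$ is formed from $M_{i-1}$, following the same three cases as Theorem~\ref{thm:fs_invariant}. In the Steiner-insertion case, the new point $v$ is placed at an approximate farthest corner of a cell $\vor(x)$ whose approximate aspect exceeds $\tau(1-\e)(1-\eta^2/2)$. Combining the approximation guarantee of Corollary~\ref{cor:approxfarthest} with this refinement trigger, I would extract an explicit constant $c_1\in(0,1)$ such that $\dist(x,v)\ge c_1 R(x)$. Since $v$ lies in the old $\vor(x)$, every other $y\in M_{i-1}$ satisfies $\dist(v,y)\ge \dist(x,v)$, so the new inradius satisfies $r(v)\ge \tfrac12 c_1 R(x)$. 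An upper bound on $R(v)$ then comes from the observation that every corner of the new $\vor(v)$ lies in some old cell $\vor(y)$ whose outradius is controlled by the inductive hypothesis and whose center $y$ is within $O(R(x))$ of $v$. For any pre-existing vertex $u$ whose cell changes, $R(u)$ is monotone nonincreasing under insertion, and $r(u)$ can decrease only if $v$ becomes $u$'s new nearest neighbor, in which case the drop is controlled by $\dist(u,v)\ge \dist(x,v)\ge c_1 R(x)$; using the Voronoi-neighbor bound $\dist(u,x)\le 2R(x)$ then yields an aspect bound of the same form.

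For an input insertion without snapping, the precondition $\dist(p,q)\ge r(q)/K$ plays exactly the role that $\dist(x,v)\ge c_1 R(x)$ did above, and the same analysis applies with $K$ in place of $1/c_1$. A snap replaces a Steiner $q$ by a nearby input $p$ within $r(q)/K$; this is a bounded perturbation of $\vor(q)$ and of its neighbors' cells, and choosing $K$ large enough as a function of $\tau''$ ensures that each $r$ and $R$ changes multiplicatively by at most a small factor. A new-layer creation is a fresh instance of the algorithm on a locally rescaled $\etaCAGE$ and reduces to the base case, because the new component starts as the complete graph on two points inside its own cage.

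The main obstacle I anticipate is the upper bound on $R(v)$ in the Steiner case: a corner of the new $\vor(v)$ is the circumcenter of a Delaunay simplex containing $v$ and need not sit in $\vor(x)$, so bounding its distance to $v$ requires combining the inductive aspect bound on surrounding cells, the Voronoi-neighbor distance bound $\dist(u,x)\le 2R(x)$, and the fact that the approximate farthest-corner direction is drawn from $\etaCAGE$. Carefully composing the approximation factors $(1-\e)$ and $(1-\eta^2/2)$, the snapping constant $K$, the feature-size constant $K$ from Theorem~\ref{thm:fs_invariant}, and the cage density $\eta$ is the delicate bookkeeping that ultimately pins down the final value of $\tau''$.
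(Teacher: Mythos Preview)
Your step-by-step induction with a single fixed $\tau''$ cannot close during a refinement phase. Suppose $M_{i-1}$ is $\tau''$-well-spaced and a Steiner point $v$ is added at the approximate far corner of $\vor(x)$. For a pre-existing vertex $u$ whose nearest neighbor becomes $v$, you have $R_{M_i}(u)\le R_{M_{i-1}}(u)\le \tau'' r_{M_{i-1}}(u)$, while $r_{M_i}(u)=\dist(u,v)/2$ is only guaranteed to be $\ge r_{M_{i-1}}(u)/2$ (since $v\notin \ball(u,r_{M_{i-1}}(u))\subset\vor_{M_{i-1}}(u)$). Hence you obtain $\aspect_{M_i}(u)\le 2\tau''$, not $\le\tau''$, and your proposed remedy via ``the Voronoi-neighbor bound $\dist(u,x)\le 2R(x)$'' does not remove the factor of two: even granting that bound (which is not obvious, as $u$ need not be a Voronoi neighbor of $x$), it controls $r_{M_{i-1}}(u)$ only from above by a quantity comparable to $R(x)$, not by $r_{M_i}(u)$. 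The same circularity is what makes the $R(v)$ bound you flag a genuine obstruction rather than mere bookkeeping: the far corner of $\vor_{M_i}(v)$ sits in some old $\vor_{M_{i-1}}(y)$, giving $R(v)\le \tau'' r_{M_{i-1}}(y)$, but every available upper bound on $r_{M_{i-1}}(y)$ feeds back into $R(v)$. Over an arbitrarily long sequence of Steiner insertions these self-referential inequalities compound, and no fixed $\tau''$ absorbs them.

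The paper breaks the circularity by anchoring not to $M_{i-1}$ but to the mesh $M_\star$ at the \emph{beginning} of the current refinement phase. Lemma~\ref{lem:inputs_preserve_quality} shows that one input insertion into a $\tau$-well-spaced set degrades quality only to $\tau'=2K\tau$, so $M_\star$ is $\tau'$-well-spaced. For any later $M_j$ in the phase and any $v\in M_j$, Lemma~\ref{lem:steiners_preserve_quality} writes $\aspect_{M_j}(v)=2R(v)/\lfs_{M_j}(v)$, replaces $\lfs_{M_j}$ by $\lfs_{M_\star}/K$ via Lemma~\ref{lem:steiners_preserve_fs}, and lower-bounds $\lfs_{M_\star}(v)$ using the gap-ball lemma of Hudson et~al.\ (Lemma~\ref{lem:fs_bounded_in_gap_balls}): the ball of radius $R(v)$ about the far corner of $\vor_{M_j}(v)$ is empty of $M_j$-points, hence of $M_\star$-points, and since $M_\star$ is $\tau'$-well-spaced this forces $\lfs_{M_\star}(v)\ge R(v)/(16\tau'^{2})$. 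This yields $\tau''=32K\tau'^{2}$, a constant expressed in terms of $\tau'$ rather than $\tau''$, which is exactly the decoupling your induction lacks.
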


  The proof of this fact follows exactly the pattern of previous work~\cite{hudson06sparse,miller11beating} and is broken into Lemmas~\ref{lem:inputs_preserve_quality} and~\ref{lem:steiners_preserve_quality} below.
  First we show that inserting an input point can decrease the aspect ratio from $\tau$ to a constant $\tau'$.
  Then, during the refinement the aspect ratio never dips below $\tau''$.
  By construction, the refinement phase will end with aspect ratio at most $\tau$ as computed using the approximate farthest corner routine.
  Since this is approximation, the true aspect ratio of the cells may be worse by a small constant factor.

  \begin{lemma}\label{lem:inputs_preserve_quality}
    Adding one point to a $\tau$-well-spaced set using the \Insert routine results in
a $\tau'$-well-spaced set, where $\tau' = 2K\tau$.
  \end{lemma}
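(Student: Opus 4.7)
The plan is to verify that $M_i = M_{i-1} \cup \{p\}$ is $\tau'$-well-spaced by bounding the aspect ratio of every cell, treating the new point $p$ and the existing points $v \in M_{i-1}$ separately.

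For an existing point $v$, the new Voronoi cell is contained in the old one, so $R_{M_i}(v) \le R_{M_{i-1}}(v) \le \tau\, r_{M_{i-1}}(v)$. The inradius shrinks only if $p$ becomes the new nearest neighbor of $v$, yielding $r_{M_i}(v) = \dist(p,v)/2$. The crucial bound is $\dist(p,v) \ge r_{M_{i-1}}(v)/K$: if $v = q := \NN(p, M_{i-1})$, this is exactly the insertion rule for input points (which applies in the case where we actually insert $p$, rather than snap or create a new layer); if $v \ne q$, then $\dist(p,q) \le \dist(p,v)$, and by the triangle inequality $\dist(v, q) \le 2 \dist(p, v)$, while $\dist(v, q) \ge 2\, r_{M_{i-1}}(v)$ since $q \in M_{i-1}$, so the assumption $\dist(p, v) < r_{M_{i-1}}(v)/K$ would force $K < 1$, contradicting $K \ge 1$. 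Hence $r_{M_i}(v) \ge r_{M_{i-1}}(v)/(2K)$, and the aspect ratio of $v$ in $M_i$ is at most $2K\tau = \tau'$.

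For the new point $p$, the insertion rule yields $r(p) = \dist(p,q)/2 \ge r_{M_{i-1}}(q)/(2K)$. To bound the outradius, let $x$ be the farthest corner of $\vor_{M_i}(p)$, so $R(p) = \dist(p,x)$. Since $x \in \vor_{M_i}(p)$, it is closer to $p$ than to any old point, and in particular $\dist(p, x) \le \dist(q, x)$. The strategy is to show $R(p) \le R_{M_{i-1}}(q)$ --- intuitively, because inserting $p$ near $q$ shifts the relevant old bisectors by only $r(p)$, the reach of the new cell from $p$ stays within $q$'s old outradius. Combined with $R_{M_{i-1}}(q) \le \tau\, r_{M_{i-1}}(q) \le 2K\tau\, r(p) = \tau' r(p)$, this closes the aspect bound for $p$.

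The main obstacle is rigorously establishing $R(p) \le R_{M_{i-1}}(q)$. The easy case is when the farthest corner $x$ lies in the closed cell $\vor_{M_{i-1}}(q)$, so that $\dist(q,x) \le R_{M_{i-1}}(q)$ and the chain $\dist(p,x) \le \dist(q,x) \le R_{M_{i-1}}(q)$ closes immediately. Otherwise $x$ lies in a neighboring cell $\vor_{M_{i-1}}(q')$ whose outradius could in principle be much larger than $R_{M_{i-1}}(q)$; naive Lipschitz estimates on the feature size fail to close the bound because a factor of $\tau > 1$ appears on both sides of the resulting inequality for $R(p)$. Overcoming this requires a careful geometric accounting of how the new cell extends past $\vor_{M_{i-1}}(q)$: each new bisector from $p$ is displaced from the corresponding old bisector from $q$ by exactly $r(p)$ in the normal direction, so tracing the segment from $p$ through its exit point on $\bdy \vor_{M_{i-1}}(q)$ to $x$ yields the required bound. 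Once this geometric step is in hand, the remainder of the proof reduces to routine arithmetic with the invariant constants.
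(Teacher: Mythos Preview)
Your treatment of the existing points $v\in M_{i-1}$ is essentially the paper's proof: outradii cannot grow, and the inradius drops by at most a factor $2K$ for $v=q:=\NN(p,M_{i-1})$ (via the insertion rule $\dist(p,q)\ge r(q)/K$) and by at most a factor $2$ for any other $v$ (since such $v$ already had a neighbor in $M_{i-1}$ at distance $\le 2\dist(p,v)$). That is exactly how the paper argues.

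Where you diverge is in treating the new point $p$. The paper's proof simply does not address $\aspect_{M_i}(p)$ at all --- it only discusses how the aspect ratio of a \emph{pre-existing} point can increase. So in attempting to bound $R(p)/r(p)$ you are trying to fill a gap the paper leaves open. Unfortunately your attempt does not close. You set as target $R(p)\le R_{M_{i-1}}(q)$, identify the hard case (the farthest corner $x$ of $\vor_{M_i}(p)$ lying outside $\vor_{M_{i-1}}(q)$) as ``the main obstacle,'' and then leave it unresolved. The sketched fix is incorrect: the bisecting hyperplanes of $\overline{pv}$ and $\overline{qv}$ are not parallel unless $p,q,v$ are collinear, so ``displaced by exactly $r(p)$ in the normal direction'' has no well-defined meaning; the midpoints differ by $(p-q)/2$, a vector of length $r(p)$ but in the fixed direction $p-q$, not along the old bisector's normal. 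I see no reason the inequality $R(p)\le R_{M_{i-1}}(q)$ should hold in general, since $\vor_{M_i}(p)$ may extend into neighboring old cells of larger outradius, and the Lipschitz route fails for exactly the reason you state. A workable route is to apply Lemma~\ref{lem:fs_bounded_in_gap_balls} to the $\tau$-well-spaced set $M_{i-1}$ and the empty ball $\ball(x,R(p))$: this gives $\lfs_{M_{i-1}}(p)\ge R(p)/(16\tau^2)$, and combining with $\lfs_{M_{i-1}}(p)\le 2r_{M_{i-1}}(q)+2r(p)\le (4K+2)r(p)$ yields $\aspect_{M_i}(p)=O(K\tau^2)$ --- enough for the Quality Invariant, though not the $\tau'=2K\tau$ stated in the lemma.
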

  \begin{proof}
    Let $p$ be the newly inserted point.
    The only way for the aspect ratio of a point $q$ to increase is if its inradius decreases.
    This can only happen if the nearest neighbor is $p$.
    If $p$ was added to the Voronoi cell of $q$, then $q$ must be an input point for otherwise, the algorithm would have snapped $q$ to $p$.
    Moreover, for $q\in P$, we only insert $p$ if $\dist(p,q) \ge r(q)/K$, so $r(q)$ decreases by at most a factor of $2K$.
    For any other $q$ whose Voronoi cell's inradius decreases, $p$ cannot be more than twice as close to $q$ than the previous nearest neighbor of $q$.
    In that case the aspect ratio of $q$ goes down by at most a factor of $2$.
  \end{proof}

  To prove that a sequence of Steiner points added during refinement maintains a bound on the quality, we will use the following lemma from Hudson et al.~\cite[Lemma~6.1]{hudson06sparse}.
  
  \begin{lemma}\label{lem:fs_bounded_in_gap_balls}
    Let $M$ be a set of $\tau'$-well-spaced points, let $c$ be a point in the convex closure of $M$, and let $B = \ball(c,r)$ be a ball that does not contain any points of $M$ in its interior.
    Then, for all $x\in B$, $\lfs_M(x)\ge \frac{r}{16\tau'^2}$.
  \end{lemma}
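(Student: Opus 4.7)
The plan is to reduce $\lfs_M(x)$ to the inradius of the Voronoi cell containing $x$, establish the desired lower bound at a convenient base cell near $c$, and then propagate it to every cell intersecting $B$. For the reduction: for $x \in \vor_M(q)$ with $q = \NN(x, M)$, the second-nearest $M$-point to $x$ is a Voronoi neighbor $q'$ of $q$ (otherwise the segment from $x$ to $q'$ would cross a strictly closer Voronoi neighbor, contradicting that $q'$ is second-nearest) and satisfies $\dist(x, q') \geq \dist(q, q')/2 \geq r(q)$ since $\dist(q, q') \geq 2 r(q)$. So it suffices to prove $r(q) \geq r/(16 \tau'^2)$ for every $q$ whose cell meets $B$.

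For the base case, let $p = \NN(c, M)$. Since the interior of $B$ is empty of $M$, $\dist(c, p) \geq r$, so $R(p) \geq \dist(c, p) \geq r$, and $\tau'$-well-spacedness gives $r(p) \geq r/\tau'$ and $\lfs_M(p) \geq 2r/\tau'$. To extend the bound to a general cell $\vor_M(q)$ that meets $B$, I would combine the $1$-Lipschitz property of $\lfs_M$ with two distance bounds: if $y \in \vor_M(q) \cap B$, then $\dist(q, c) \leq \dist(q, y) + \dist(y, c) \leq R(q) + r \leq \tau' r(q) + r$, and similarly $\dist(p, c) \leq R(p) \leq \tau' r(p)$. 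A Lipschitz estimate $2 r(q) = \lfs_M(q) \geq \lfs_M(p) - \dist(p, q)$ together with these distance bounds gives a linear inequality in $r(q)$ which, combined with $r(p) \geq r/\tau'$, should yield $r(q) = \Omega(r/\tau'^2)$.

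The main obstacle will be extracting the exact constant $1/16$. A naive Lipschitz argument through the air loses too much and gives a linear inequality whose solution is not tight enough for all $\tau'$. The slick route (which I expect mirrors Hudson et al.) dovetails two regimes: when $\dist(x, c) \leq (1 - 1/(16 \tau'^2)) r$, the $1$-Lipschitz property of $\lfs_M$ applied from $c$ itself (using $\lfs_M(c) \geq r$, since both nearest and second-nearest $M$-points to $c$ lie outside $\mathrm{int}(B)$) already suffices; when $\dist(x, c)$ is closer to $r$, the cell $\vor_M(q)$ contains points very near $\partial B$, and one argues more carefully that $q$ is within distance $O(r)$ of $p$ so that the Lipschitz chain to $p$ from the previous paragraph yields the bound. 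Balancing the crossover point so that both regimes match up gives the factor $16 \tau'^2$.
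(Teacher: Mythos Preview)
The paper does not supply its own proof of this lemma; it simply quotes it from Hudson et al.\ (their Lemma~6.1). So there is no in-paper argument to compare against, and the question is whether your sketch can actually be completed.

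Your reduction $\lfs_M(x)\ge r(q)$ for $q=\NN(x,M)$ is correct: for any $q'\neq q$ one has $\dist(x,q)\le\dist(x,q')$, so $\dist(q,q')\le 2\dist(x,q')$ and hence $\dist(x,q')\ge r(q)$. The base-case bound $r(p)\ge r/\tau'$ for $p=\NN(c,M)$ is also correct, and your regime-1 argument via $\lfs_M(c)\ge r$ and $1$-Lipschitzness cleanly handles every $x$ with $\dist(x,c)\le\bigl(1-\tfrac{1}{16\tau'^2}\bigr)r$.

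The genuine gap is regime~2. Writing out your Lipschitz chain back to $p$, namely $2r(q)=\lfs_M(q)\ge\lfs_M(p)-\dist(p,q)$, and substituting your own distance bounds $\dist(p,c)\le R(p)\le\tau' r(p)$ and $\dist(c,q)\le r+R(q)\le r+\tau' r(q)$, one obtains
\[
  (2+\tau')\,r(q)\;\ge\;(2-\tau')\,r(p)\;-\;r,
\]
which is vacuous as soon as $\tau'\ge 2$ --- and the lemma must hold for arbitrary $\tau'$. The underlying problem is your assertion that ``$q$ is within distance $O(r)$ of $p$'': you have only a \emph{lower} bound $r(p)\ge r/\tau'$, so $R(p)=\tau' r(p)$, and hence $\dist(p,c)$, can be arbitrarily large compared to $r$. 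No placement of the crossover point between the two regimes repairs this, because the regime-2 inequality is not merely loose but has the wrong sign throughout the relevant range of $\tau'$.

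A working argument cannot route everything through the single anchor $p$. One needs to exploit well-spacedness cell by cell along the segment $\overline{cx}$: Delaunay-adjacent cells have inradii within a factor $\tau'$ of one another, and each intermediate cell reaches into $B$ and therefore has outradius bounded below in terms of its depth inside $B$, which in turn controls how many cells the segment can cross. That cell-by-cell mechanism is what your two-regime split is missing.
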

  

  \begin{lemma}\label{lem:steiners_preserve_quality}
    Let $M_i$ and $M_k$ be the points of a mesh before and after a sequence of Steiner points were added, where $M_i$ is $\tau'$-well-spaced points.
    Then, for each $j=i\ldots k$, $M_i$ is $\tau''$-well-spaced, where $\tau''= 32 K \tau'^2$.
  \end{lemma}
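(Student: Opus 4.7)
The plan is to bound both $r_j(v)$ and $R_j(v)$ for each $v \in M_j$ in terms of $\lfs_{M_i}(v)$, so that the desired aspect ratio $\tau'' = 32 K \tau'^2$ emerges from a short chain of inequalities. Since the sequence $M_i, \ldots, M_k$ is formed by appending Steiner points only, Lemma~\ref{lem:steiners_preserve_fs} applies and yields $\lfs_{M_j}(v) \ge \lfs_{M_i}(v)/K$. Because $v \in M_j$, I have $\lfs_{M_j}(v) = 2 r_j(v)$, which immediately gives the inradius lower bound $r_j(v) \ge \lfs_{M_i}(v)/(2K)$.

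For the outradius, I would let $c$ be the corner of $\vor_{M_j}(v)$ realizing $R_j(v)$, so that $\dist(v,c) = R_j(v)$ and the ball $B := \ball(c, R_j(v))$ contains no point of $M_j$ in its interior (by definition of a Voronoi vertex). Since $M_i \subseteq M_j$, $B$ also has no interior points of $M_i$. The point $v$ lies on $\partial B$, so $v \in B$, and the bounding cage added in preprocessing ensures that $c$ lies in the convex closure of $M_i$. Hence Lemma~\ref{lem:fs_bounded_in_gap_balls}, applied to the $\tau'$-well-spaced set $M_i$, yields $\lfs_{M_i}(v) \ge R_j(v)/(16\tau'^2)$, which rearranges to the outradius upper bound $R_j(v) \le 16\tau'^2 \lfs_{M_i}(v)$.

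Combining these two inequalities with Lemma~\ref{lem:steiners_preserve_fs} in the other direction gives
\[
  R_j(v) \;\le\; 16\tau'^2 \lfs_{M_i}(v) \;\le\; 16 K \tau'^2 \lfs_{M_j}(v) \;=\; 32 K \tau'^2 \, r_j(v) \;=\; \tau'' r_j(v),
\]
so $\aspect_{M_j}(v) \le \tau''$, as required. The main technical point is checking the hypotheses of the gap ball lemma: we need $c$ to lie in the convex closure of $M_i$ and $B$ to be empty of $M_i$. Both are essentially bookkeeping — the preprocessing cage makes the convex closure of $M_i$ contain the entire working domain (and hence every Voronoi vertex encountered during refinement), while emptiness of $B$ in the smaller set $M_i$ follows automatically from emptiness in $M_j$. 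Beyond these verifications, no further case analysis is needed; in particular, this argument treats original $M_i$-vertices and newly inserted Steiner points uniformly.
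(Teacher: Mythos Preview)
Your proof is correct and follows essentially the same approach as the paper: both use Lemma~\ref{lem:steiners_preserve_fs} to pass from $\lfs_{M_j}(v)$ to $\lfs_{M_i}(v)$ and Lemma~\ref{lem:fs_bounded_in_gap_balls} (applied to the empty ball $\ball(c,R_j(v))$ and the $\tau'$-well-spaced set $M_i$, evaluated at $x=v$) to bound $R_j(v)$ in terms of $\lfs_{M_i}(v)$. The paper compresses the argument into a single displayed chain $\aspect_{M_j}(v)=2R/\lfs_{M_j}(v)\le 2KR/\lfs_{M_i}(v)\le 32K\tau'^2$, whereas you spell out the hypothesis checks for the gap-ball lemma explicitly; your final chain re-applies Lemma~\ref{lem:steiners_preserve_fs} rather than dividing your two bounds directly, which is slightly redundant but harmless.
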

  \begin{proof}
    Fix any intermediate mesh $M_j$ and any vertex $v\in M_j$.
    Let $R = R(v)$ be the outradius of $\vor_{M_j}(v)$.
    So, the aspect ratio of $\vor_{M_j}(v)$ is bounded using Lemmas~\ref{lem:steiners_preserve_fs} and~\ref{lem:fs_bounded_in_gap_balls} as follows.
    \[
      \aspect_{M_j}(v) 
        = \frac{2R}{\lfs_{M_j}(v)}
        \le \frac{2KR}{\lfs_{M_i}(v)}
        \le 32K\tau'^2.
    \]
  \end{proof}
  
  \paragraph{A Note About Constants} 
  \label{par:a_note_about_constants_}
  
    This section is optimized for concise proofs rather than optimal constants. 
    The main slack that can be tightened by a more detailed analysis comes from the constant $\tau'$ defined in Lemma~\ref{lem:inputs_preserve_quality}.
    Although the constant is tight, at most two points can achieve the bound at any given time.
    In the application of Lemma~\ref{lem:fs_bounded_in_gap_balls}, it is assumed that all points may have aspect ratio $\tau'$, and this assumption leads to a much looser bound than could be realized.
  


  \section{Correctness and Running Time} 
\label{sec:correctness}

We now show that the described algorithm correctly computes a hierarchically well-spaced point set and maintains a
superset of the edges in the Delaunay graph in each set in the hierarchy.

\subsection{Delaunay Neighbors and Degrees}
\label{sec:delaunay_nbrs}
In this section, we prove the two most important properties of the approximate Delaunay graph, namely that
the graph contains the true Delaunay graph and also has bounded degree.
We will assume that the points are in sufficiently general position that the Delaunay triangulation is indeed a triangulation.
If this is not the case, any infinitesimal (or symbolic) perturbation will make it so.

\begin{theorem}\label{thm:approx_delaunay_contains_true_delaunay}
  The approximate Delaunay graph produced by our algorithm contains all edges of the Delaunay triangulation of each point set in the output hierarchy of well-spaced points.
\end{theorem}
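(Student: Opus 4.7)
The plan is to prove the theorem by induction on the step index $i = 1, \ldots, N$, maintaining the stronger invariant that for every set $S$ in the hierarchy at step $i$, the current approximate Delaunay graph restricted to $S$ contains $\del_S$. The base case is trivial: whenever a new set is created (either as the initial set or via \NewCage in a new layer), the approximate Delaunay graph is initialized to the complete graph on that set, which contains everything. For the inductive step, I would split into three cases corresponding to the three kinds of updates: pure insertion of a new point $p$, deletion of a Steiner point $q$ during a snap, and the subsequent insertion that follows the snap.

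For the pure insertion case, my first observation is that adding a vertex $p$ to a point set can never create a Delaunay edge between two pre-existing vertices; it can only delete such edges (since the empty-ball witness for any surviving edge $uv$ in the new triangulation is also an empty-ball witness in the old triangulation). Hence, by the inductive hypothesis and the fact that insertion does not remove edges from the approximate Delaunay graph, every Delaunay edge of the new triangulation between old vertices is already stored. It remains to check edges of the form $pv$ for $v$ a Delaunay neighbor of $p$ in the new triangulation. The key geometric fact is that any such $v$ satisfies $\dist(p,v) \le 2R_{\text{new}}(v) \le 2R_{\text{old}}(v)$ (adding $p$ only shrinks Voronoi cells and hence outradii) and $\dist(p,v) \le 2R_{\text{new}}(p) \le 2\tau'' r(p)$ by the Quality Invariant (Theorem~\ref{thm:quality_invariant}). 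Therefore $v$ is not pruned by either BFS pruning rule, and $v$ is retained in the final neighbor list since $\dist(p,v) \le 2R(p)$.

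What remains for this case, and what I expect to be the main obstacle, is showing that the BFS actually reaches every Delaunay neighbor of $p$ rather than getting cut off by pruning. My plan here is to use the topology of the link of $p$ in the new Delaunay triangulation: since $\NN(p)$ is always a Delaunay neighbor of $p$, and the link of $p$ (as a simplicial complex) is connected in the generic case, it suffices to show that every edge of the link is present in the current approximate Delaunay graph and that no vertex on a path in the link gets pruned. The former holds because link edges are Delaunay edges between old vertices, handled by induction; the latter holds because every vertex of the link is itself a Delaunay neighbor of $p$ and hence satisfies the non-pruning bounds above. A careful treatment of the boundary cage and general position (invoked in the theorem's preamble) is needed to make the link-connectivity argument rigorous in the presence of the enclosing $\etaCAGE$.

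For the snap case, I treat deletion and re-insertion separately. When the Steiner point $q$ is deleted, any Delaunay edge of the new triangulation (without $q$) is either a Delaunay edge of the old triangulation (hence present by induction), or is a new Delaunay edge that appeared when $q$'s cell was absorbed. Any such new edge must lie between two vertices whose Voronoi cells both bordered $\vor(q)$, i.e., two approximate Delaunay neighbors of $q$; the algorithm explicitly adds the complete graph on $\xnbor(q)$, so all such edges are captured. The post-deletion pruning rule $\dist(u,v) > 2\min\{R(u),R(v)\}$ does not remove any true Delaunay edge, since the shared Voronoi facet places $\dist(u,v)/2$ at most $R(u)$ and at most $R(v)$; the approximation error in $R$ can be absorbed into a slightly larger retention constant (e.g., using $(1-\e)R$ in the threshold). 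The subsequent insertion of the input point $p$ is then handled exactly as in the pure insertion case. Finally, since the post-processing stage only adds edges (as described in Section~\ref{sub:post_processing}) and does not remove any, the invariant is preserved in the consolidated output graph, yielding the theorem.
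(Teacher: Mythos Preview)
Your proposal is correct and follows essentially the same route as the paper's proof: induction on updates, with the connectivity of the Delaunay neighbors of the newly inserted point as the crux. The only cosmetic difference is that you phrase this connectivity via the link of $p$ in $\del_{M_i}$, whereas the paper invokes the polar polytope of $\vor_{M_i}(p)$; these are the same graph, and your extra observation that link edges, being Delaunay in $M_i$ between old vertices, must already be Delaunay in $M_{i-1}$ (hence present in the old approximate graph by induction) makes explicit a step the paper leaves implicit.
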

\begin{proof}
  The proof is by induction on the number of insertions and deletions performed by the algorithm.
  We show that at each step, the approximate Delaunay graph contains all of the Delaunay edges of each set in the hierarchy.
  
  As a base case, note that the starting cage connecting all pairs of vertices clearly contains all Delaunay edges.
  Assume inductively that the set $M_{i-1}$ from the first $i-1$ insertions and deletions satisfies the inductive hypothesis.
  
  Let $M_i$ be the set formed by inserting $p$ into $M_{i-1}$.
  By the Quality Invariant (Theorem~\ref{thm:quality_invariant}), there are no Delaunay edges $(p,q)$ longer than $\min\{2R(p),2R(q), 2\tau'' r(p)\}$.
  So, when the algorithm prunes such edges from the approximate Delaunay graph, it does not remove any Delaunay edges.
  It will therefore suffice to prove that all new Delaunay edges formed by inserting $p$ are added to the approximate Delaunay graph.
  All such edges are necessarily incident to $p$.
  At least one such edge is the edge from $p$ to its nearest neighbor $q = \NN(p,M_{i-1})$.
  Note that the neighbors of $p$ in $\del_{M_i}$ induce a connected subgraph of the Delaunay triangulation.
  Indeed, by Delaunay/Voronoi duality, this graph is isomorphic to the edges of the polytope polar to $\vor_{M_i}(p)$, and polytopal graphs are connected.
  So, the graph search starting at $q$ will find all of the Delaunay neighbors of $p$.
  
  Now, we consider the case of a point deletion that occurs when we snap.
  Let $M_i$ be the set of points formed from removing a point $p$ from $M_{i-1}$.
  In this case, the only new Delaunay edges in $\del_{M_{i}}$ that are not in $\del_{M_{i-1}}$ are those that connect two neighbors of $p$.
  Since we tentatively add all such pairs before pruning those that cannot be Delaunay, the updated graph again contains all Delaunay edges.
\end{proof}
We now prove an upper bound on the degree of a node in the approximate Delaunay graph at any time in the algorithm.
\begin{lemma}\label{lem:nrb_bound}
  In the approximate Delaunay graph of a well-spaced mesh with quality $\tau''$, each point has
at most $(3\tau''^2)^d$ neighbors. 
\end{lemma}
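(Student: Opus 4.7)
The plan is a straightforward volume packing argument in the ball around $p$. Let $p$ be a fixed vertex and let $v$ be an arbitrary neighbor of $p$ in the approximate Delaunay graph. First I would use the two structural facts that constrain $v$: the pruning rule ensures $\dist(p,v) \le 2\min\{R(p),R(v)\}$, while the definition of the inradius ensures that $v$ lies outside the open inball of $p$, so $\dist(p,v) \ge 2r(p)$. Combined with $R(p) \le \tau'' r(p)$, this places every neighbor in the annulus $2r(p) \le \dist(p,v) \le 2\tau'' r(p)$.

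The heart of the argument is to exhibit, for each neighbor $v$, a ball of radius $\rho := r(p)/\tau''$ centered at $v$, and show that these balls are pairwise disjoint. Disjointness follows from $\rho \le r(v)$, since the inballs $\ball(v,r(v))$ are pairwise disjoint by definition. To see $\rho \le r(v)$, I would chain the two inequalities above: $\dist(p,v) \le 2R(v) \le 2\tau'' r(v)$ yields $r(v) \ge \dist(p,v)/(2\tau'') \ge r(p)/\tau'' = \rho$, where the last step uses $\dist(p,v) \ge 2r(p)$.

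Next I would check containment: each such ball lies in $\ball(p, 2R(p) + \rho)$, and since $2R(p) + \rho \le 2\tau'' r(p) + r(p)/\tau'' \le 3\tau'' r(p)$ (using $\tau'' \ge 1$), all of the disjoint balls fit inside $\ball(p,3\tau'' r(p))$. A standard volume comparison then gives
\[
|\xnbor(p)| \;\le\; \frac{\vol(\ball(p,3\tau'' r(p)))}{\vol(\ball(v,r(p)/\tau''))} \;=\; \left(\frac{3\tau'' r(p)}{r(p)/\tau''}\right)^d \;=\; (3\tau''^2)^d,
\]
which is exactly the claimed bound.

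The only delicate point is the lower bound $r(v) \ge r(p)/\tau''$; without the pruning rule $\dist(p,v) \le 2R(v)$, one would not be able to transfer the quality bound from $v$'s Voronoi cell back to a uniform lower bound on inradius. Everything else is routine Euclidean packing, and the constants in the containing ball radius can be absorbed into the base $3$ without affecting the shape of the bound.
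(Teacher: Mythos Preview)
Your proposal is correct and follows essentially the same volume packing argument as the paper: both derive $\dist(p,v)\le 2\tau'' r(p)$ from the pruning rule and the quality bound on $p$, use $\dist(p,v)\le 2R(v)\le 2\tau'' r(v)$ together with $\dist(p,v)\ge 2r(p)$ to get $r(v)\ge r(p)/\tau''$, and then pack disjoint balls of radius $r(p)/\tau''$ into $\ball(p,3\tau'' r(p))$. The only cosmetic difference is that you pack uniform balls of radius $\rho=r(p)/\tau''$ contained in the inballs, whereas the paper packs the inballs themselves and then divides by the volume of the smallest one; your phrasing is arguably cleaner but the argument is the same.
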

\begin{proof}
Let $p$ any vertex in the approximate Delaunay graph. 
Let $\ell$ be its degree and let $q_1, \ldots, q_\ell$ be its neighbors. 
Consider the inballs of $q_1, \dots, q_\ell$.
As an immediate consequence of the pruning condition, each neighbor $q_i$ has inradius $r(q_i) \le \frac{1}{2}\dist(p,q_i) \le \tau''r(p)$. Similarly, $r(q_i) \geq r(p)/\tau''$.
Moreover,
\[
  \dist(p,q_i) \le 2\tau''r(p).
\]
Therefore, all the inballs lie within a ball of radius $\dist(p,q_i) + \max_i\{r(q_i)\} \le 3\tau''r(p)$ centered at $p$. Since the inballs are disjoint, the following volume packing argument establishes the claim.
\[
  \ell \leq \frac{\vol\left(\ball(3\tau''r(p))\right)}{\vol\left(\ball(r(p)/\tau'')\right)} = (3\tau''^2)^d.
\]
\end{proof}

\subsection{Calculating Approximate Aspect Ratios}
\label{sec:approx_aspect_ratio}
We now prove the correctness of our method of approximating the aspect ratio of a Voronoi cell by finding an approximate farthest corner.

First, we show that we can approximate the farthest corner of a Voronoi cell in any specified direction in the desired running time by using the ellipsoid method. For each neighbor $q$ of $p$, we create a linear constraint that $x\in\R^d$ lies on the same side of $\mathcal{H}(p,q)$ as $p$ (here, $\mathcal{H}(x,y)$ denotes the hyperplane bisecting $\overline{xy}$). Let $\mathcal{L}$ be the set of all such constraints. By Lemma~\ref{lem:nrb_bound}, there are $2^{O(d)}$ constraints. Moreover, it is clear that the feasible region of the linear program defined by these constraints is precisely $\vor(p)$.
\begin{theorem}\label{thm:ellipsoid}
 Let $p$ be a point in the mesh which is, without loss of generality, located at the origin, and let $c\in \R^d$ be a vector with $\|c\| = 1$. Suppose $w$ is
the farthest corner of $\vor(p)$ in the direction of $c$, i.e., $w$ is the point of
$\vor(p)$ that maximizes $w^T c$. Then, in $2^{O(d)}
O\left(\log^2\left(\frac{\tau''}{\epsilon}\right)\right)$ time, we can calculate a point
$z\in\vor(p)$ such that $z^T c \geq (1-\epsilon) w^T c$.
\end{theorem}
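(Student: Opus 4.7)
The plan is to reduce the problem to approximate LP feasibility and then apply a binary search paired with the ellipsoid method, exploiting the geometry of $\vor(p)$ to control both the outer containing ellipsoid and the lower bound on the feasible volume.

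First, I would formalize the LP. By Lemma~\ref{lem:nrb_bound}, the Voronoi cell $\vor(p)$ is the intersection of $2^{O(d)}$ halfspaces coming from the hyperplanes $\mathcal{H}(p,q)$ for $q \in \xnbor(p)$. Adding the objective direction $c$, the exact answer $w^T c$ lies in the interval $[r(p),\, r(p)\tau'']$: the lower bound because the inball $\ball(p, r(p)) \subseteq \vor(p)$ contains the point $r(p)\, c$, and the upper bound because the aspect ratio assumption gives $\vor(p) \subseteq \ball(p, r(p)\tau'')$.

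Next, I would perform a binary search over $t \in [r(p), r(p)\tau'']$. For each trial value $t$, I append the constraint $c^T x \ge t$ and run the ellipsoid method to test feasibility, using $\ball(p, r(p)\tau'')$ as the initial containing ellipsoid. When the search terminates, the last feasible trial value yields a point $z \in \vor(p)$ whose objective is within the binary search precision of $w^T c$. To reach relative precision $\epsilon$, I only need $O(\log(\tau''/\epsilon))$ binary search steps, since $w^T c \ge r(p)$ makes the target separation $\epsilon\, w^T c \ge \epsilon r(p)$.

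The main obstacle is proving the volume lower bound that drives the ellipsoid iteration count. I would argue as follows: when $t = (1-\epsilon)w^T c$, the feasible region contains a ball of radius $\Omega(\epsilon\, r(p))$. Convexity gives $\mathrm{conv}(\ball(p, r(p)) \cup \{w\}) \subseteq \vor(p)$, and for $\lambda \in [0,1]$ this set contains a ball of radius $\lambda r(p)$ around $y_\lambda := (1-\lambda) w + \lambda p$. Since $c^T y_\lambda = (1-\lambda)\, w^T c$, every point of the small ball around $y_\lambda$ has $c$-value at least $(1-\lambda) w^T c - \lambda r(p)$. Using $w^T c \ge r(p)$, one checks that $\lambda = \epsilon/2$ keeps this bounded below by $(1-\epsilon) w^T c = t$, so the feasible region contains a ball of radius $\tfrac{\epsilon}{2} r(p)$. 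The volume ratio between the initial ellipsoid and this ball is $(2\tau''/\epsilon)^{O(d)}$, so standard ellipsoid analysis gives $O(d^2 \log(\tau''/\epsilon))$ iterations, each costing $2^{O(d)}$ (one constraint sweep plus an $O(d^2)$ rank-one update).

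Putting it together: one feasibility call costs $2^{O(d)} \log(\tau''/\epsilon)$, and $O(\log(\tau''/\epsilon))$ calls suffice for the binary search, giving the claimed $2^{O(d)}\, O(\log^2(\tau''/\epsilon))$ bound. The delicate step is really the volume argument above; everything else is bookkeeping over standard ellipsoid-method parameters.
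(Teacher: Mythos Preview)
Your proposal is correct and follows the same overall strategy as the paper: binary search over a threshold $t \in [r(p), r(p)\tau'']$, with each trial resolved by an ellipsoid-method feasibility test on $\vor(p)\cap\{x:c^Tx\ge t\}$, and the iteration count controlled by a lower bound on the volume of the feasible region together with the outer ball $\ball(p,r(p)\tau'')$. The one substantive difference is how you obtain that volume bound. The paper (Lemma~\ref{lem:threshellipsoid} and Appendix~\ref{sec:ellipsoid}) computes the volume of the truncated cone $\mathrm{conv}(\ball(p,r)\cup\{w\})\cap\{x:c^Tx\ge t\}$ explicitly via an elementary but somewhat lengthy derivation of the semiaxes of its ellipsoidal base, and then minimizes over the unknown $s=w^Tc$ to arrive at roughly $C_{d-1}(\epsilon r/2)^d/\sqrt{d}$. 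Your argument instead inscribes the ball $\ball\bigl((1-\lambda)w,\lambda r\bigr)$ with $\lambda=\epsilon/2$ directly in the feasible region, which is cleaner and yields the same $(2\tau''/\epsilon)^{O(d)}$ volume ratio and hence the same $O(d^2\log(\tau''/\epsilon))$ iteration bound. The paper's finer cone computation buys no asymptotic improvement, so your inscribed-ball shortcut is a genuine simplification of this step; everything else (the $2^{O(d)}$ constraint count from Lemma~\ref{lem:nrb_bound}, the $O(\log(\tau''/\epsilon))$ binary-search depth, and the per-iteration cost) matches the paper's argument.
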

Before we prove the above theorem, we prove a lemma.
\fullversion{
\begin{figure*}[ht]
\centering
 \begin{subfigure}[t]{0.4\textwidth}
  \centering
  \includegraphics[height=2in]{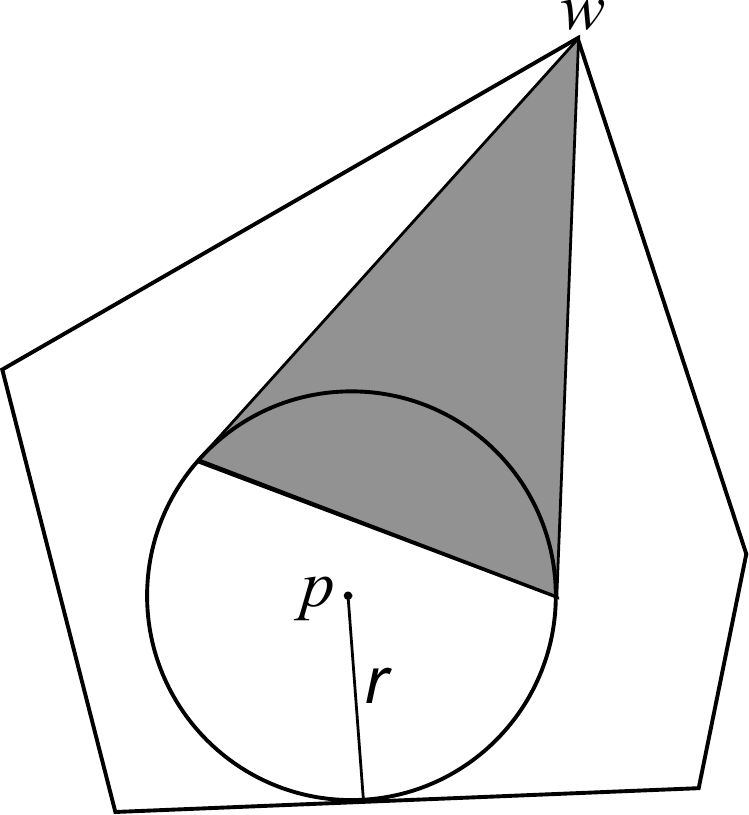}
  \caption{The shaded volume shows the cone formed by joining $w$ to $\ball(p,r)$. It is contained in
the Voronoi cell of $p$.}
  \label{fig:cone}
 \end{subfigure}
\quad\quad\quad\quad\quad
 \begin{subfigure}[t]{0.4\textwidth}
  \centering
  \includegraphics[height=2in]{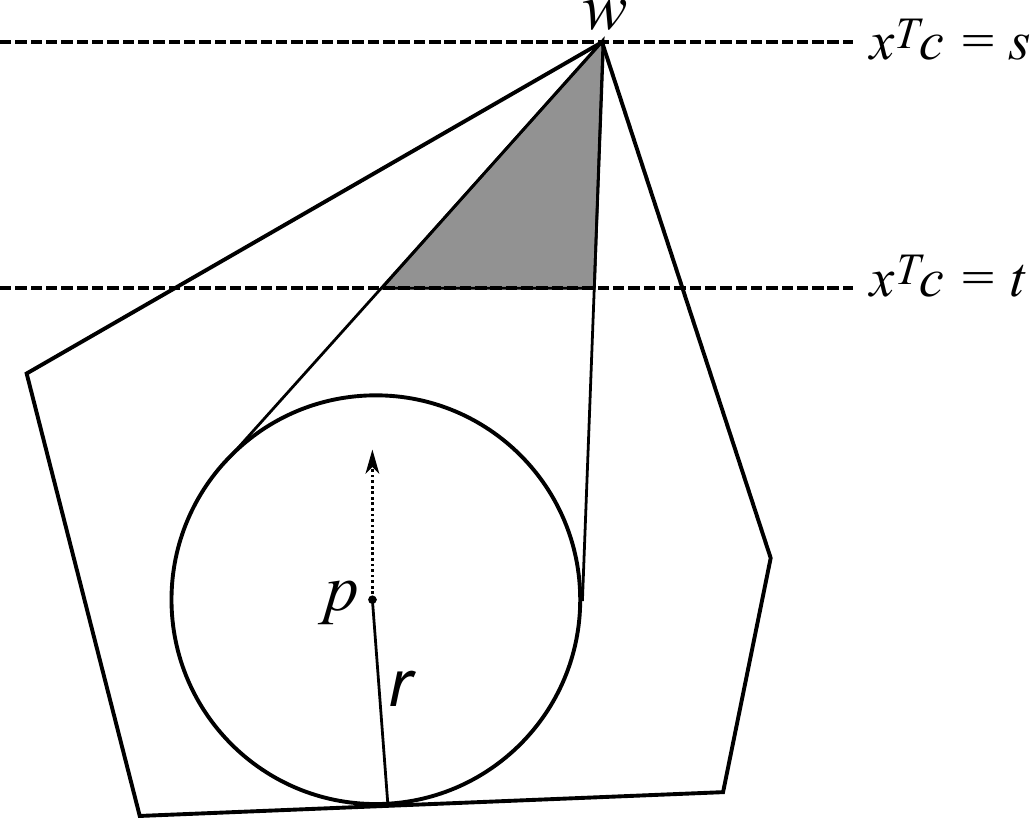}
  \caption{The shaded volume shows the cone formed by taking the intersection of the cone in (a)
with the half space $\{x\in\R^d: x^T c\geq t\}$. The dotted arrow indicates the direction given by
$c$.}
  \label{fig:cone_trunc}
 \end{subfigure}
 \caption{Bounding the volume of the feasible region during the ellipsoid method}
\end{figure*}
}

\begin{lemma}\label{lem:threshellipsoid}
 Let $p$, $c$, $w$, and $\mathcal{L}$ be as before, and assume $s = w^T c$. Write $r=r(p)$ and $R =
R(p)$. Suppose that $t \in [r, r\tau'']$ and $s \geq \frac{1}{1-\epsilon}r$. Then, in $2^{O(d)}
O\left(\log\left(\frac{\tau''}{\epsilon}\right)\right)$ time, we can produce output according to
the following rules:
 \begin{enumerate}
  \item If $t < s \left(1-\frac{\epsilon}{2}\right)$, then the output should be a point $x\in\R^d$ that satisfies the constraint set $\mathcal{L} \cup \{x^T c \geq t\}$.
  \item If $t > s$, then the output should indicate ``NOT FOUND.''
  \item If $s\left(1-\frac{\epsilon}{2}\right) \leq t \leq s$, then the output can be either a point $x\in\R^d$ satisfying $\mathcal{L} \cup \{x^T c \geq t\}$ or ``NOT FOUND.''
 \end{enumerate}
\end{lemma}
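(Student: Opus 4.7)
The plan is to cast the problem as LP feasibility and apply the ellipsoid method, whose running time depends only logarithmically on the ratio between the volume of a known containing ellipsoid and a lower bound on the volume of the feasible region. The feasible set is $\vor(p) \cap \{x : c^T x \geq t\}$, described by $\mathcal{L}$ together with one extra linear constraint; by Lemma~\ref{lem:nrb_bound} there are $2^{O(d)}$ constraints, so a single separation-oracle query costs $2^{O(d)}$. For the initial containing ellipsoid I would use $\ball(p, r\tau'')$: since $\vor(p)$ has aspect ratio at most $\tau''$, its outradius is at most $r\tau''$, so this ball contains the feasible region and has volume $V_0 = \omega_d (r\tau'')^d$.

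I would split into the two nontrivial cases. If $t > s$, then $\{c^T x \geq t\}$ is disjoint from $\vor(p)$ by the definition of $s$ as the maximum of $c^T x$ on $\vor(p)$, so the feasible region is empty; running the ellipsoid method for the iteration bound derived below never finds a feasible center and outputs ``NOT FOUND.'' If $t < s(1 - \epsilon/2)$, I must lower-bound the volume of the feasible region. The key construction is the cone $K$ with apex $w$ and base the $(d-1)$-dimensional disk $\ball(p,r) \cap c^\perp$; because $\vor(p)$ is convex and contains both $w$ and $\ball(p,r)$, we have $K \subseteq \vor(p)$ (the shaded region in Fig.~\ref{fig:cone}). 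Slicing $K$ by the hyperplane $\{c^T x = u\}$ for $u \in [0,s]$ yields a $(d-1)$-ball of radius $r(s-u)/s$, so
\[
  \vol\bigl(K \cap \{c^T x \geq t\}\bigr) \;=\; \int_t^s \omega_{d-1}\Bigl(\tfrac{r(s-u)}{s}\Bigr)^{d-1} du \;=\; \frac{\omega_{d-1}\, r^{d-1}(s-t)^d}{d\, s^{d-1}}.
\]
Applying the two hypotheses $s - t > s\epsilon/2$ and $s \geq r/(1-\epsilon) \geq r$ turns this into a lower bound of the form $v \geq \omega_{d-1}(\epsilon/2)^d r^d / d$, which is precisely where both assumptions of the lemma are consumed. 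In the middle case $s(1-\epsilon/2) \leq t \leq s$ either output is permitted, so no additional argument is needed.

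Finally, I would plug into the standard ellipsoid-method guarantee: after $N = O(d^2 \log(V_0/v))$ iterations, the method either returns a feasible point or certifies that the feasible region has volume below $v$. The ratio $V_0/v$ is bounded by $(c\tau''/\epsilon)^d$ times a polynomial in $d$, giving $\log(V_0/v) = O(d \log(\tau''/\epsilon))$ and hence $N = O(d^3 \log(\tau''/\epsilon))$ iterations. Each iteration spends $2^{O(d)}$ on the separation oracle (checking $2^{O(d)}$ constraints, each in $O(d)$) plus $\mathrm{poly}(d)$ on the ellipsoid update, so the total cost is $2^{O(d)}\, O(\log(\tau''/\epsilon))$ as claimed.

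The main obstacle is the volume lower bound: since the algorithm does not know $s$ exactly, the bound must be expressed purely in terms of $r$, $\epsilon$, and $\tau''$, which forces careful use of both hypotheses $t \geq r$ (via $s \geq r/(1-\epsilon)$) and $t < s(1-\epsilon/2)$ inside the cone-volume integral. Everything else is routine packaging of the standard ellipsoid-method analysis.
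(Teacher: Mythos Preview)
Your proposal is correct and follows essentially the same route as the paper: run the ellipsoid method on $\mathcal{L}\cup\{c^Tx\ge t\}$ with starting ellipsoid $\ball(p,r\tau'')$, and lower-bound the feasible volume in case~(1) via a cone with apex $w$ sitting inside $\vor(p)$.

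The one genuine difference is the cone you pick. The paper takes the full $d$-dimensional cone over $\ball(p,r)$ and intersects with $\{c^Tx\ge t\}$; the resulting truncated cone has an ellipsoidal base whose semiaxes must be computed explicitly (this is the content of Appendix~\ref{sec:ellipsoid}), yielding the bound $\frac{1}{d}C_{d-1}\,r^{d-1}(s-t)^d/(s^2-r^2)^{(d-1)/2}$, which is then minimized over $s$. Your choice of the cone over the equatorial disk $\ball(p,r)\cap c^{\perp}$ is strictly smaller but makes the slicing trivial by Cavalieri, giving the cleaner one-line integral $\omega_{d-1}r^{d-1}(s-t)^d/(d\,s^{d-1})$. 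Your bound is weaker by a factor polynomial in $d$, but that is irrelevant once everything is absorbed into $2^{O(d)}$. One minor slip: the standard central-cut ellipsoid guarantee is $N=O(d\log(V_0/v))$, not $O(d^2\log(V_0/v))$, so the paper gets $O(d^2\log(\tau''/\epsilon))$ iterations rather than your $O(d^3\log(\tau''/\epsilon))$; again this is swallowed by the $2^{O(d)}$ per-iteration cost and does not affect the stated running time.
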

\begin{proof}
 The idea is to run the ellipsoid method for LP feasibility, but we make sure to terminate the process early after $O(d^2 \log(\tau''/\epsilon))$ iterations. At this point, if the center of the final ellipsoid is feasible, we output the center; otherwise, we output ``NOT FOUND.'' It is evident that rules (2) and (3) will be satisfied by the algorithm.
\shortversion{
\begin{figure}
\centering
 \begin{subfigure}[t]{0.2\textwidth}
  \centering
  \includegraphics[height=1.2in]{cone.pdf}
  \caption{The shaded volume shows the cone formed by joining $w$ to $\ball(p,r)$. It is contained in
the Voronoi cell of $p$.}
  \label{fig:cone}
 \end{subfigure}
\quad
 \begin{subfigure}[t]{0.2\textwidth}
  \centering
  \includegraphics[height=1.2in]{cone_truncated.pdf}
  \caption{The shaded volume shows the cone formed by taking the intersection of the cone in (a)
with the half space $\{x\in\R^d: x^T c\geq t\}$. The dotted arrow indicates the direction given by
$c$.}
  \label{fig:cone_trunc}
 \end{subfigure}
 \caption{Bounding the volume of the feasible region during the ellipsoid method}
\end{figure}
}

 Let us show that (1) is also satisfied. Suppose $t < s\left(1-\frac{\epsilon}{2}\right)$. Then, the
corresponding LP is feasible. We shall now bound the volume of the feasible region $\mathcal{R}$
from below. Recall that the ball $\ball(p, r)$ of radius $r$ centered at $p$ is entirely contained in
$\vor(p)$. Thus, the $d$-dimensional cone joining apex $w$ to $\ball(p,r)$ is contained in $\vor(p)$
(see Figure~\ref{fig:cone}). Hence, the intersection of this cone with $\{x\in\R^d: x^T c \geq t\}$
(which is itself a cone with an ellipsoid base) is contained inside $\mathcal{R}$ (see
Figure~\ref{fig:cone_trunc}). Using elementary geometry, we find that the volume of this
intersection is at least
 $$
  \frac{1}{d} C_{d-1} \frac{r^{d-1} (s-t)^d}{(s^2 - r^2)^{\frac{d-1}{2}}},
 $$
where $C_j$ is the volume of a $j$-dimensional unit ball 
\fullversion{(see Appendix~\ref{sec:ellipsoid} for a detailed derivation).}
\shortversion{(a detailed derivation can be found in the full version~\cite{ARXIV_VERSION}).} 
Using the assumptions $t < s\left(1-\frac{\epsilon}{2}\right)$ and $s \geq
\frac{1}{1-\epsilon}r$, we see that the above quantity is defined and is at least
$$
\frac{1}{d} C_{d-1} \left(\frac{\epsilon}{2}\right)^d \frac{r^{d-1} s^d}{(s^2 -
r^2)^\frac{d-1}{2}}.
$$
The above quantity is minimized for $s = \sqrt{d} r$. Thus, it follows that the volume of
the feasible region is bounded from below by
$$
\frac{1}{\sqrt{d}} C_{d-1} \left(\frac{\epsilon r}{2}\right)^d
\left(1+\frac{1}{d-1}\right)^\frac{d-1}{2} \geq \sqrt{\frac{2}{d}} C_{d-1} \left(\frac{\epsilon
r}{2}\right)^d.
$$
Moreover, by Theorem~\ref{thm:quality_invariant}, we know that $\ball(p, r\tau'')$ contains $\vor(p)$
and, therefore, the feasible
region as well. Hence, by a standard analysis of the ellipsoid method (see \cite{GroetschelLovaszSchrijver1993}), the number of
iterations required to correctly output a feasible point is $O\left(d
\log\left(\frac{vol(\ball(p,r\tau''))}{vol(\mathcal{R})}\right)\right)$, which we compute as
\begin{eqnarray*}
 O\left(d \log\left( \frac{C_d (r\tau'')^d}{\sqrt{\frac{2}{d}} C_{d-1} (\epsilon
r/2)^d}\right)\right) = O\left(d^2 \log\left(\frac{\tau''}{\epsilon}\right)\right).
\end{eqnarray*}
Terminating the ellipsoid method after $O\left(d^2 \log\left(\frac{\tau''}{\epsilon}\right)\right)$ iterations  therefore suffices to obtain the desired output, hence establishing condition (1).

Finally, since performing each iteration of the ellipsoid method can be done in $2^{O(d)}$ time, the total running time is $2^{O(d)} O\left(\log\left(\frac{\tau''}{\epsilon}\right)\right)$, as desired.
\end{proof}
Now, we are ready for the proof of Theorem~\ref{thm:ellipsoid}.
\begin{proof}[\fullversion{Proof }of Theorem~\ref{thm:ellipsoid}]
 We show how to obtain $z$. Let $s = w^T c$. The trick is to run the feasibility algorithm of Lemma~\ref{lem:threshellipsoid} on the linear program with constraint set $\mathcal{L} \cup \{x^T c \geq t\}$ for different values of $t$. Let $L = \left\lceil \frac{2(\tau''-1)}{\epsilon} \right\rceil$.	Then, let $t_0, t_1, t_2, \dots, t_L \in [r, r\tau'']$ be defined by $t_i = r + \frac{r\tau''-r}{L}i$. In particular, we do a binary search to find $i\in \{0,1,\dots,L\}$ such that the algorithm of Lemma~\ref{lem:threshellipsoid} returns a solution for $t=t_i$ but reports ``NOT FOUND'' for $t=t_{i+1}$. Then, we simply output the $z$ corresponding to the obtained solution for $t=t_i$. If such an $i$ does not exist (i.e., the algorithm does not return a feasible solution for any $i$), we simply output $z = rc$. Note that the total runtime of this overall procedure is $2^{O(d)} O\left(\log^2\left(\frac{\tau''}{\epsilon}\right)\right)$.
 
 It remains to show that $z^T c \geq (1-\epsilon) s$.
 \begin{enumerate}
  \item If $s < \frac{1}{1-\epsilon}r$, then note that any $z$ that is outputted satisfies $z^T c \geq r > (1-\epsilon)s$.
  \item Suppose $s\geq\frac{1}{1-\epsilon}r$. Let $i$ be the index returned by the binary search for which $t=t_i$ returns a solution but $t=t_{i+1}$ returns ``NOT FOUND.'' Clearly, we must have that $t_{i+1} \geq \left(1-\frac{\epsilon}{2}\right)s$. Thus, we get the desired bound on $t_i$ as follows.
  \begin{align*}
    t_i &= t_{i+1} - \frac{r\tau'' - r}{L}
    \geq \left(1-\frac{\epsilon}{2}\right)s - \frac{\epsilon r}{2}\\
    &\geq \left(1-\frac{\epsilon}{2}\right)r - \frac{\epsilon r}{2}
    = (1-\epsilon)r.
  \end{align*}
 \end{enumerate}
%
%
%
%
\end{proof}
Now, the farthest corner in the direction of an arbitrary unit vector $c$ is, in general, not the farthest corner of the entire Voronoi cell. Hence, we need the following lemma.
\begin{lemma}\label{lem:theta_approx}
  Let $p$ be a point in the mesh with $q$ as the farthest corner of its Voronoi cell. 
  Let $0<\theta<\pi/2$ and let $c\in S^{d-1}$ such that the angle between $c$ and $q-p$ is $< \theta$. 
  Then, applying the underlying algorithm of Theorem~\ref{thm:ellipsoid} with this choice of $c$ produces a point $z$ inside $\vor(p)$ for which $ \dist(p,z) \geq (1-\epsilon)(\cos\theta) \dist(p,q)$.
\end{lemma}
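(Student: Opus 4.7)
The plan is to combine the guarantee of Theorem~\ref{thm:ellipsoid} with two elementary geometric observations: that $q$ is a competitor point in the LP with objective $c$, and that the distance from $p$ is bounded below by the projection onto the unit vector $c$. As in Theorem~\ref{thm:ellipsoid}, I translate so that $p$ sits at the origin, and let $w$ denote the true farthest corner of $\vor(p)$ in the direction of $c$, i.e.\ the maximizer of $x^T c$ over $\vor(p)$.

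First I would observe that, since $q\in\vor(p)$ and the angle between $c$ and $q$ (with $p$ at the origin, so $q-p=q$) is strictly less than $\theta$, we have
\[
  q^T c \;=\; \|q\|\cos(\angle(c,q)) \;\geq\; \|q\|\cos\theta \;=\; \dist(p,q)\cos\theta.
\]
Because $w$ is by definition the maximizer of the inner product with $c$ over $\vor(p)$, and $q$ lies in $\vor(p)$, we obtain $w^T c \geq q^T c \geq \dist(p,q)\cos\theta$.

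Next, applying Theorem~\ref{thm:ellipsoid} to the direction $c$ produces a point $z\in\vor(p)$ with $z^T c \geq (1-\epsilon)\, w^T c$. Chaining the two inequalities yields
\[
  z^T c \;\geq\; (1-\epsilon)\,w^T c \;\geq\; (1-\epsilon)(\cos\theta)\,\dist(p,q).
\]
Finally, since $c$ is a unit vector, Cauchy--Schwarz gives $\dist(p,z)=\|z\|\geq z^T c$, which delivers the claimed bound. The ``hard part'' is really just making the sign/translation conventions align with Theorem~\ref{thm:ellipsoid} (so that $p$ is the origin and inner products measure signed projection along $c$); once that is done, the lemma reduces to the two-line chain above, with no appeal to the internal workings of the ellipsoid method beyond the statement already established.
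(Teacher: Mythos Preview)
Your proof is correct and follows essentially the same approach as the paper: translate $p$ to the origin, use that $q$ is feasible so $w^T c \ge q^T c \ge \dist(p,q)\cos\theta$, apply Theorem~\ref{thm:ellipsoid} to get $z^T c \ge (1-\epsilon)w^T c$, and then use $\|z\|\ge z^T c$. The only difference is that you make the final Cauchy--Schwarz step explicit, whereas the paper leaves it implicit.
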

\begin{proof}
As before, assume without loss of generality that $p$ is located at the origin. Let $w$ be the farthest corner of $\vor(p)$ in the direction of $c$. The algorithm of Theorem~\ref{thm:ellipsoid} produces a point $z$ inside $\vor(p)$ such that $z^T c \geq (1-\epsilon)w^T c$. The desired claim follows since $w^T c \geq q^T c \geq \dist(p,q)\cos\theta$.
\end{proof}
As a corollary, we obtain the following:
\begin{corollary}\label{cor:approxfarthest}
 If $q$ is the farthest corner of $\vor(p)$, then repeating the algorithm of Theorem~\ref{thm:ellipsoid} for each $c\in\etaCAGE$ and taking the point $z$ that is farthest from $p$ among all outputted points gives us $\dist(p,z) \geq (1-\epsilon)\left(1-\frac{\eta^2}{2}\right) \dist(p,q)$.
\end{corollary}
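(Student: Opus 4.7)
The plan is to combine the covering property of $\etaCAGE$ with Lemma~\ref{lem:theta_approx}. Let $c^{\ast} = (q-p)/\dist(p,q)$ be the unit vector pointing from $p$ toward the true farthest corner $q$. Since $\etaCAGE$ is an $\eta$-cage on $S^{d-1}$, there exists some $c \in \etaCAGE$ with $\|c - c^{\ast}\| \le \eta$. The algorithm iterates over all such $c$ and keeps the farthest output, so it suffices to show that for this particular $c$, the returned point $z$ already satisfies $\dist(p,z) \ge (1-\epsilon)(1 - \eta^2/2)\dist(p,q)$.

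The first step is a routine trigonometric estimate: if $\theta$ denotes the angle between $c$ and $c^{\ast}$, then from $\|c - c^{\ast}\|^2 = 2 - 2\cos\theta \le \eta^2$ we obtain $\cos\theta \ge 1 - \eta^2/2$. Since $\eta \in (0,1]$ we have $\cos\theta > 0$ and therefore $\theta < \pi/2$, so the hypothesis of Lemma~\ref{lem:theta_approx} is met.

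Now applying Lemma~\ref{lem:theta_approx} with this choice of $c$ and angle $\theta$ yields a point $z \in \vor(p)$ with
\[
\dist(p,z) \;\ge\; (1-\epsilon)(\cos\theta)\dist(p,q) \;\ge\; (1-\epsilon)\left(1 - \tfrac{\eta^2}{2}\right)\dist(p,q).
\]
Because the algorithm returns the point farthest from $p$ among all outputs over $c \in \etaCAGE$, the final output is at least as far from $p$ as this particular $z$, which establishes the corollary.

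There is no real obstacle here; the only thing to check carefully is the $\cos\theta \ge 1 - \eta^2/2$ bound from the chord-length inequality on the unit sphere, and that $\eta \in (0,1]$ keeps us safely in the regime $\theta < \pi/2$ required by Lemma~\ref{lem:theta_approx}. Everything else is plugging the cage guarantee from Lemma~\ref{lem:nets_exist} into the per-direction approximation result already proved.
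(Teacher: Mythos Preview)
Your proof is correct and follows essentially the same approach as the paper: pick the cage vector $c$ nearest the unit direction toward $q$, convert the chord bound $\|c-c^\ast\|\le\eta$ into $\cos\theta\ge 1-\eta^2/2$ via $\|c-c^\ast\|^2=2-2\cos\theta$, and invoke Lemma~\ref{lem:theta_approx}. If anything, your write-up is cleaner than the paper's, which appears to have a typo (it normalizes $z-p$ rather than $q-p$ when defining the target direction).
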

\begin{proof}
 Consider the point $z' = (z-p)/ \dist(p,z)$. Since $z'\in S^{d-1}$, there exists $c\in \etaCAGE$ such that $\dist(z', c) \leq \eta$. Since $z'$ and $c$ are both unit vectors, elementary geometry implies that the angle $\theta$ between $z'$ and $c$ satisfies $\cos\theta = 1-\frac{ \dist(z',c)^2}{2} \geq 1-\frac{\eta^2}{2}$. Thus, Lemma \ref{lem:theta_approx} implies the desired claim.
\end{proof}

\subsection{The Greedy Walk has Constant Length} 
\label{sec:greedy_walk}

  The greedy walk is named as such because it uses the predecessor pairing from the greedy permutation to start the walk.
  This is in contrast to previous walking-based point location schemes that use randomization.
  Also, it walks through the Voronoi diagram rather than the Delaunay simplices and therefore needs no hyperplane tests.
  In this section, we show that the greedy walk only takes a constant number of steps before terminating.

\begin{lemma}\label{lem:greedy_walk}
  For an input point $p$, the greedy walk from $\phi(p)$ to $p$ only takes $2^{O(d)}$ steps.
\end{lemma}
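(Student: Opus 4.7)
The plan combines a uniform inradius lower bound on $M_{i-1}$ with a dyadic volume-packing count of the Voronoi cells crossed by the segment $\overline{\phi(p)\,p}$. Let $L := \dist(p,\phi(p))$. Because $\phi(p) = \NN(p, P_{i-1})$ and the greedy radii $\lambda_j := \dist(p_j, P_{j-1})$ are non-increasing in $j$, the set $P_{i-1}$ is $L$-separated, and a routine triangle inequality then gives $\lfs_{P_{i-1}}(x) \ge L/2$ at every $x \in \R^d$: if the nearest input to $x$ lies within $L/2$ of it, then the second-nearest input is at distance $\ge L/2$ by separation, and otherwise the nearest itself already contributes $\ge L/2$. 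The Feature Size Invariant (Theorem~\ref{thm:fs_invariant}) promotes this to $\lfs_{M_{i-1}}(v) \ge L/(2K)$ for every $v \in M_{i-1}$, so every Voronoi inball in the current mesh has radius at least $r_{\min} := L/(4K)$, and $r(\phi(p)) \ge L/(2K)$ in particular forces $L = O(r(\phi(p)))$.

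Next I would bound the number of crossed cells by volume packing. Every cell $\vor(v)$ whose interior meets the segment has its center within $R(v) \le \tau'' r(v)$ of the segment by the Quality Invariant (Theorem~\ref{thm:quality_invariant}), and the inballs $\ball(v, r(v))$ are pairwise disjoint. Partitioning the crossed cells into dyadic scales $r(v) \in [2^k r_{\min},\, 2^{k+1} r_{\min})$, the inballs at a given scale $k$ fit inside a stadium around the segment of radius $O(\tau'' 2^k r_{\min})$, whose volume is $O\bigl(L\,(\tau'' 2^k r_{\min})^{d-1} + (\tau'' 2^k r_{\min})^d\bigr)$; dividing by the minimum inball volume at that scale bounds the count at scale $k$ by $2^{O(d)}(L/(2^k r_{\min}) + 1) = 2^{O(d)}(4K/2^k + 1)$. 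Summing over the $O(\log K) = O(1)$ scales with $2^k r_{\min} \le L$ produces a convergent geometric series of size $2^{O(d)}$.

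The step I expect to be the main obstacle is ruling out that the walk crosses many cells of scale much larger than $L$, since a naive dyadic sum over all scales would incur a spread-dependent $\log(r_{\max}/L)$ factor. I would resolve this by observing that two cells adjacent in the walk share a Voronoi facet and therefore have inradii within a factor $\tau''$ of each other, so starting from $r(\phi(p)) = \Theta(r_{\min})$ the walk can only grow in scale gradually; combined with the $2^{O(d)}$ per-scale packing bound and the monotone advance of the walk along the segment, a careful bookkeeping shows the total cell count telescopes to $2^{O(d)}$ and hence the walk takes $2^{O(d)}$ steps.
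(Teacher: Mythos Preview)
Your global inradius lower bound $r(v)\ge L/(4K)$ is correct (and in fact cleaner than the paper's localized version, which uses the Lipschitz property of $\lfs$ along the segment). The gap is entirely in the part you yourself flag as the ``main obstacle'': bounding the contribution of cells at scales larger than $L$. Your proposed fix is not solid. You assert $r(\phi(p))=\Theta(r_{\min})$, but you have only established the lower direction $r(\phi(p))\ge r_{\min}$; the inequality $L\le 2K\,r(\phi(p))$ you derive goes the wrong way for this. And even granting a small starting scale, the gradual-growth observation (adjacent inradii within a factor $\tau''$) together with the per-scale packing bound $2^{O(d)}$ does not by itself preclude the walk from visiting $2^{O(d)}$ cells at each of logarithmically many scales; the ``telescoping'' you invoke is never actually performed.

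The paper dissolves the obstacle with a one-line observation you are missing. If $\vor_M(q)$ meets the segment $S=\overline{\phi(p)\,p}$ at some point $x$, then simply because $\phi(p)\in M$ one has $\dist(x,q)\le\dist(x,\phi(p))\le L$, hence $\dist(q,\phi(p))\le 2L$, and therefore $r(q)\le\tfrac12\dist(q,\phi(p))\le L$ for every such $q\ne\phi(p)$. So every crossed cell already has inradius in the bounded range $[\gamma L,\,L]$ with $\gamma=\Theta(1/K)$, its center lies within $2L$ of $\phi(p)$, and its inball lies inside $\ball(m,3L)$ for $m$ the midpoint of $S$. A single volume-packing step then gives at most $(3/\gamma)^d=2^{O(d)}$ crossed cells. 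There are no large-scale cells to worry about, no dyadic decomposition is needed, and the argument is two lines rather than a page.
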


\begin{proof}
  Let $M$ denote the current set of inserted points at the time we insert $p$.
  Let $S$ be the line segment $\overline{p\phi(p)}$, and let $\ell$ denote its length.
  Let $m$ be the midpoint of $S$.
  The proof will be by a volume packing argument.
  It will suffice to show that any Voronoi cell $\vor_M(q)$ intersecting $S$ has an inradius in the range $[\gamma\ell, \ell]$ for some constant $\gamma=1/(4K(1+\tau))$ and also that $ \dist(q,m)\le 2\ell$.
  Thus, the inballs are contained in $\ball(m, 3\ell)$ and their number is at most $(3/\gamma)^d = 2^{O(d)}$.
  
  First we observe that for any $x\in S$ the feature size $\lfs_{P_i}(x)$ is at least $\ell/2$.
  Otherwise, there must be a point $y\in P_{i-1}$ that is closer than $\ell$ to either $p_i$ or $\phi(p_i)$, but this would violate the greedy ordering.
  
  Let $q$ be any point of $M$ such that $\vor_M(q)$ intersects $S$.
  So, some $x\in S$ is closer to $q$ than to $\phi(p_i)$ and by the triangle inequality, $ \dist(q, \phi(p_i))\le 2\ell$.
  The inradius $r(q)$ of $\vor_M(q)$ is at most half the distance from $q$ to $\phi(p_i)$ and thus is at most $\ell$.
  To lower bound $r(q)$ we use Theorem~\ref{thm:fs_invariant} and the quality of $M$ as follows.
  \begin{align*}
    r(q) 
      &= \frac{1}{2}\lfs_{M}(q)
      \ge \frac{1}{2K}\lfs_{P_i}(q)
      \ge \frac{1}{2K}\lfs_{P_i}(x) -  \dist(x,q)\\
      &\ge \frac{1}{4K}\ell - \tau r(q).
  \end{align*}
  Thus we get that $r(q)\ge \ell/(4K(1+\tau)) = \gamma\ell$.
  The result follows from the aforementioned volume packing argument.
\end{proof}


\subsection{Running Time} 
\label{sec:running_time}

  
  \begin{theorem}\label{thm:running_time}
    The running time of our algorithm is \shortversion{\\}$2^{O(d)}(n\log n + m)$.
  \end{theorem}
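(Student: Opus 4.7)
The plan is to charge the total work to five disjoint phases and verify each is at most $2^{O(d)}(n \log n + m)$: (i) preprocessing, (ii) greedy walks for input points, (iii) local updates (neighborhood maintenance and farthest-corner recomputation) at each inserted point, (iv) Steiner refinement, and (v) post-processing to flatten the hierarchy.

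For preprocessing, the approximate greedy permutation and the predecessor pairing $\phi$ are computed in $2^{O(d)} n \log n$ time by Har-Peled--Mendel, and building the initial bounding $\etaCAGE$ around $P$ is $2^{O(d)}$ work. For phase (ii), point location for every input point is done by a single greedy walk from $\phi(p_i)$ to $p_i$; by Lemma~\ref{lem:greedy_walk} each such walk visits $2^{O(d)}$ Voronoi cells, and at each step the next cell is determined by scanning the at most $2^{O(d)}$ approximate Delaunay neighbors of the current vertex (Lemma~\ref{lem:nrb_bound}), so the total cost across all $n$ inputs is $2^{O(d)} n$.

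The heart of the bound is (iii) and (iv). For each insertion or deletion event at a point $v$ we (a) do a BFS from the nearest neighbor in the approximate Delaunay graph, (b) prune and update edges incident to $v$ and its neighbors, and (c) recompute the approximate farthest corner for each affected vertex. Because the mesh is $\tau''$-well-spaced at all times (Theorem~\ref{thm:quality_invariant}), every visited vertex has at most $(3\tau''^2)^d = 2^{O(d)}$ neighbors (Lemma~\ref{lem:nrb_bound}), so the BFS and the edge updates cost $2^{O(d)}$ per event. Each farthest-corner recomputation costs $2^{O(d)}$ by Corollary~\ref{cor:approxfarthest} and Theorem~\ref{thm:ellipsoid} (the $\etaCAGE$ has $2^{O(d)}$ directions and each ellipsoid call is $2^{O(d)}$), and only the $2^{O(d)}$ updated neighbors of the inserted point require recomputation. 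For a Steiner insertion, the point location step is free: the nearest neighbor is the owner of the bad cell pulled from the refinement queue, so each Steiner event also costs $2^{O(d)}$.

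It remains to bound the \emph{number} of events. The Feature Size Invariant (Theorem~\ref{thm:fs_invariant}) gives size optimality: the final mesh has $|M|=m$ vertices, and a standard packing argument against $\lfs_P$ (see \cite[Ch.~3]{sheehy11mesh}) bounds the total number of Steiner points ever inserted by $2^{O(d)} m$, since each snap eliminates one Steiner in exchange for an input point and there are only $n \le m$ such swaps. Summing, phases (iii) and (iv) contribute $2^{O(d)}(n + m) = 2^{O(d)} m$ (the $n$ is absorbed into $n \log n$). Finally, the post-processing contracts each tree edge by adding a complete bipartite graph on the cage $C$ and the $G_B$-neighbors $D$ of the shared vertex; both $|C|$ and $|D|$ are $2^{O(d)}$ by Lemma~\ref{lem:nets_exist} and Lemma~\ref{lem:nrb_bound}, so this is $2^{O(d)}$ per hierarchy node and hence $2^{O(d)} m$ overall. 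Adding the phases gives $2^{O(d)}(n \log n + m)$.

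The main subtlety I anticipate is the accounting for snaps: one must ensure a Steiner insertion that is later annihilated by a snap is still paid for by the $2^{O(d)} m$ bound on total Steiner events. This follows because every snap is charged to a distinct input point (there are $\le n$ snaps), and by the Feature Size Invariant the number of surviving Steiner points is $2^{O(d)} m$, so the total number of Steiner insertions across the algorithm is $2^{O(d)} m + n$, which is absorbed into the stated bound.
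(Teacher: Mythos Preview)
Your proof is correct and follows essentially the same approach as the paper: $2^{O(d)} n \log n$ for the greedy-permutation preprocessing, then $2^{O(d)}$ work per insertion event, justified by Lemmas~\ref{lem:greedy_walk} and~\ref{lem:nrb_bound} and Theorem~\ref{thm:ellipsoid}. You are in fact more careful than the paper, which compresses the argument to four sentences and does not explicitly account for snaps or post-processing; your simple bookkeeping that total insertions $\le m + n$ (final size plus at most $n$ snapped deletions) is exactly what the paper implicitly uses when it writes ``each step of the algorithm adds a new point,'' so the detour through the Feature Size Invariant and a packing argument to count events is unnecessary, though harmless.
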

  \begin{proof}
    The expected running time of the greedy permutation algorithm is $2^{O(d)}n\log n$.
    Each step of the algorithm adds a new point so it suffices to show that each insertion takes only $2^{O(d)}$ time.
    Lemmas~\ref{lem:greedy_walk} and~\ref{lem:nrb_bound} imply that the point location work is only $2^{O(d)}$ per input point.
    Moreover, the degree bound in Lemma~\ref{lem:nrb_bound} also implies that there are only $2^{O(d)}$ vertices to update with any insertion.
    Each takes only $2^{O(d)}$ time to locate the new neighbors and recompute the approximate farthest corners, by Theorem~\ref{thm:ellipsoid}.
    Thus, the total running time after preprocessing is $2^{O(d)}m$.
  \end{proof}

  \section{Conclusion} 
\label{sec:conclusion}

  We have presented several new ideas for mesh generation.
  The most relevant are the use of linear programming to find large empty regions in the mesh without resorting to enumerating the simplices of the Delaunay triangulation and the use of greedy permutations as a preprocess for walking-based point location.
  The latter idea alone could replace the rather complicated range net methods of previous work to get running times independent of the spread~\cite{miller11beating}.


  \bibliographystyle{abbrv}
  \bibliography{../bibliography}
  
  \renewcommand{\thesection}{\Alph{section}}
  \setcounter{section}{0}
  
  \section{Pseudocode}
\label{sec:pseudocode}

We present the pseudocode for the algorithm with the goal of giving the most concise description that both allows a rigorous analysis and could guide an implementation.
The blocks of code are whitespace delimited, i.e.\ indentation indicates scope.
 
We assume a graph data structure supporting dynamic insertions and deletions using the methods \InsertEdge, \DeleteEdge, \InsertVertex, and \DeleteVertex.
The vertices are the geometric points.
The approximate Delaunay neighbors of a vertex $v$ can be accessed from $\Nbr(v)$.

The main routine for constructing a $\tau$-well-spaced set of points, given an input set $P$ and quality parameter $\tau$, is as follows.
We indicate $\tau$ as an input parameter, but we assume that it is available to all of the methods without explicitly passing it.

\algoline
\WellSpacedPoints($P$: points, $\tau$: quality parameter) \\
\tab  $(P,\phi) \leftarrow \ApproximateGreedyPermutation(P)$\\
\tab  $\InsertVertex(p_1)$\\
\tab  $\NewLayer(p_2,p_1)$\\
\tab  \For $i=3$ \To $|P|$\\
\tab\tab    $\Insert(p_i)$\\
\tab\tab    $\Refine()$

\algoline
The following is a routine for considers a new point $p$ for insertion.

\algoline
\Insert($p$: point)\\
\tab  $q \gets \FindNN(p,\phi(p))$\\
\tab  \If $q$ is a Steiner point \Then $\Snap(p,q)$\\
\tab  \ElseIf $\dist(p,q) < \frac{1}{K} r(q)$ \Then $\NewLayer(p,q)$\\
\tab  \Else $\RegularInsert(p,q)$

\algoline
\RegularInsert inserts a new point $p$ that lies in the Voronoi cell of an already-inserted point $q$. \Snap snaps a Steiner point $q$ to an input point $p$ which is then inserted. \NewLayer adds a new layer to the hierarchy, places point $q$ inside, and then inserts the input point $p$.

\algoline
\RegularInsert($p$: point, $q$: point)\\
\tab $\InsertVertex(p)$\\
\tab  $r(p)\gets \dist(p,q)/2$\\
\tab  Do a graph search starting from $q$:\\
\tab\tab    At a vertex $v$, \If $\dist(v,p)\le 2\min\{R(v), \tau'' r(p)\}$ \\
\tab\tab    \Then \InsertEdge$(p,v)$ and search the neighbors of $v$\\
\tab  $\UpdateAspect(p)$\\
\tab  $\PruneEdges(p)$\\
\tab  \ForEach $v\in \Nbr(p)$\\
\tab\tab    $\UpdateAspect(v)$\\
\tab\tab    $\PruneEdges(v)$
    
\algoline
\Snap($p$: point, $q$: point)\\
\tab  $q'\gets \argmin_{p'\in\Nbr(q)}\dist(p,p')$\\
\tab  $\Delete(q)$\\
\tab  $\RegularInsert(p,q')$

\algoline
\NewLayer($p$: point, $q$: point)\\
\tab  Create a new copy of $\etaCAGE$ with center $q$ and radius $2\dist(p,q)$\\
\tab  \ForEach pair of vertices $(u,v)$\\
\tab\tab $\InsertEdge(u,v)$\\
\tab  $\RegularInsert(p,q)$\\
\algoline
Next is a procedure to remove a point $p$ that has previously been inserted. The routine is used during snapping, when a Steiner point must be removed to make way for a nearby input point.

\algoline
\Delete($p$: point)\\
\tab  $N\gets \Nbr(p)$\\
\tab  $\DeleteVertex(p)$\\
\tab  \ForEach $u,v\in N$\\ 
\tab\tab    $\InsertEdge(u,v)$\\
\tab  \ForEach $v\in N$\\
\tab\tab    $\UpdateAspect(v)$\\
\tab  \ForEach $v\in N$\\
\tab\tab    $\PruneEdges(v)$\\
\algoline
The \Refine procedure repeatedly adds Steiner points at the approximate farthest corners of any cells whose aspect ratio is too large.

\algoline
\Refine()\\
\tab  \While $\Qrefine$ is nonempty\\
\tab\tab    $q\gets $ pop($\Qrefine$)\\
\tab\tab    \If $R(q)/r(q) > \tau$ \Then $\RegularInsert(\ApproxFarCorner(q), q)$\\
\algoline
The next two methods update the data structure.
\UpdateAspect computes the inradius and (approximate) outradius of a vertex, and it puts that vertex on the refinement queue if its aspect ratio is greater than $\tau$. 

\algoline
\UpdateAspect($p$: point)\\
\tab  $a\gets R(p)/r(p)$\\
\tab  $r(p)\gets \frac{1}{2} \min_{q\in \Nbr(p)} \dist(p,q)$\\
\tab  $x \gets \ApproxFarCorner(p)$\\
\tab  $R(p)\gets \dist(x,p)$\\
\tab  \If $R(p)/r(p)>\tau\ge a$ \Then push $p$ to $\Qrefine$
  
\algoline
\PruneEdges($p$: point)\\
\tab  \ForEach $q\in \Nbr(p)$\\
\tab\tab    \If $\dist(p,q) > 2 \min\{R(p), R(q)\}$ \Then $\DeleteEdge(p,q)$

\algoline
The next procedure does a greedy walk along $\overline{p'p}$, starting at $p'$, and returns the point whose Voronoi cell contains $p$. It should be noted that the notation $\mathcal{H}(u,v)$ denotes the hyperplane that bisects $\overline{uv}$.

\algoline
\newcommand{\vnext}{v_{\mathrm{next}}}
\newcommand{\vcurrent}{v_{\mathrm{current}}}
\FindNN($p$: point, $p'$: point)\\
\tab $\vcurrent \leftarrow p'$\\
\tab $\vnext \leftarrow p'$\\
\tab $w \leftarrow p'$\\
\tab \pseudocode{while} $w$ lies on $\overline{p'p}$\\
\tab\tab $s\leftarrow\infty$\\
\tab\tab \ForEach $q'\neq \vcurrent$ in $\Nbr(\vnext)$\\
\tab\tab\tab \If $\overline{wp}$ intersects $\mathcal{H}(\vnext, q')$ and $\langle p-w, q'-\vnext\rangle \geq 0$ \Then\\
\tab\tab\tab\tab $t'\leftarrow \overline{wp} \cap \mathcal{H}(\vnext, q')$\\
\tab\tab\tab\tab \If $ \dist(w,t') < s$ \Then\\
\tab\tab\tab\tab\tab $t\leftarrow t'$\\
\tab\tab\tab\tab\tab $q\leftarrow q'$\\
\tab\tab\tab\tab\tab $s\leftarrow  \dist(w,t')$\\
\tab\tab \If $s\neq\infty$\\
\tab\tab\tab $w\leftarrow t$\\
\tab\tab\tab $\vcurrent\leftarrow \vnext$\\
\tab\tab\tab $\vnext\leftarrow q$\\
\tab\tab \Else\\
\tab\tab\tab \Return $\vcurrent$

\algoline
The following routine takes in a point $p$ and locates an approximate farthest corner of $\vor(p)$, i.e., returns a point $z$ inside $\vor(P)$ such that $ \dist(p,z)$ is within a factor $(1-\epsilon)\left(1-\frac{\eta^2}{2}\right)$ of the distance from $p$ to the true farthest corner:

\algoline
\ApproxFarCorner($p$: point)\\ 
\tab $r\gets r(p)$\\
\tab $z\leftarrow p$\\
\tab \ForEach $c \in \etaCAGE$ \\
\tab\tab $w\gets rc$\\
\tab\tab $L\gets \left\lceil\frac{2(\tau''-1)}{\epsilon}\right\rceil$\\
\tab\tab $j\gets 0$\\
\tab\tab $k\gets L+1$\\
\tab\tab \pseudocode{while} $k-j > 1$\\
\tab\tab\tab $i\gets \left\lceil \frac{j+k}{2} \right\rceil$\\
\tab\tab\tab $t\gets r + \frac{r\tau'' - r}{L}i$.\\
\tab\tab\tab Consider the LP feasibility problem:\\
\tab\tab\tab\tab $x\in\R^d$ \pseudocode{subject to:} $\forall q \in \Nbr(p)$, $(q-p)^T x \leq \frac{(q-p)^T (q+p)}{2}$,\\
\tab\tab\tab\tab \pseudocode{\ \ \ \ \ \ \ \ \ \ \ \ \ \ \ \ \ \ \ \ \ \ \,} $c^T (x-p) \geq t$\\
\tab\tab\tab Run $O(d^2 \log(\tau''/\epsilon))$ iterations of the ellipsoid method\\
\tab\tab\tab \pseudocode{if} a feasible solution $x$ is produced \pseudocode{then}\\
\tab\tab\tab\tab $j\gets i$\\
\tab\tab\tab\tab $w\gets x$\\
\tab\tab\tab \pseudocode{else}\\
\tab\tab\tab\tab $k\gets i$\\
\tab\tab \pseudocode{if} $ \dist(p,w) > \dist(p,z)$ \pseudocode{then} $z\leftarrow w$\\
\tab \pseudocode{return} z\\
\algoline
Finally, \ApproximateGreedyPermutation is a blackbox that takes in a set $P$ of points and returns a permutation $p_1, p_2, \dots, p_{|P|}$ of the points along with the predecessor pairing $\phi$.
  \section{Ellipsoid Volume Calculation}
\label{sec:ellipsoid}

In the proof of Lemma~\ref{lem:threshellipsoid}, we state (without proof) that the volume of the
intersection of the cone in Figure~\ref{fig:cone_trunc} is at least
$$
\frac{1}{d} C_{d-1} \frac{r^{d-1} (s-t)^d}{(s^2 - r^2)^{\frac{d-1}{2}}}.
$$
We shall prove that claim here.

Without loss of generality, suppose the apex $w$ of the cone is located at the origin
and that the coordinates of $p$ are $(a,\underbrace{0,0,\dots,0}_{d-2},s)$. Then,
we wish to calculate the volume of the region given by the intersection of the cone with $\{(x_1,
x_2, \dots, x_d)\in\R^d: 0\leq x_d\leq s-t\}$. Note that this region itself is a cone with an
ellipsoid base and height $s-t$.

Let us determine the ellipsoid base. Note that the boundary of the ellipsoid is given by the
intersection of the boundary of the cone with the hyperplane $x_d = s-t$. Note that the angle formed by
the side of the cone with the axis is $\theta$, where $\cos\theta =
\sqrt{\frac{a^2+s^2-r^2}{a^2+s^2}}$. Hence, the equation of the cone is
\begin{eqnarray*}
 \left\langle \left(\frac{a}{\sqrt{a^2+s^2}}, \underbrace{0,0,\dots,0}_{d-2}, \frac{s}{\sqrt{a^2+s^2}}\right), (x_1,
x_2,\dots, x_d)\right\rangle = \sqrt{x_1^2 + x_2^2 + \dots + x_d^2} \cos\theta,
\end{eqnarray*}
or,
\begin{eqnarray*}
 \left(\frac{ax_1}{\sqrt{a^2+s^2}} + \frac{sx_d}{\sqrt{a^2+s^2}}\right)^2 = (x_1^2 + x_2^2 + \cdots
+ x_d^2)\cos^2\theta = (x_1^2 + \cdots + x_d^2)\frac{a^2+s^2-r^2}{a^2+s^2},
\end{eqnarray*}
where $\langle\, ,\,\rangle$ denotes the standard inner product on $\R^d$. To compute the intersection with the hyperplane $x_d = s-t$, we plug in $x_d = s-t$ into the above equation to
obtain
\begin{eqnarray*}
 \dfrac{\left(x_1 -
\frac{as(s-t)}{s^2-r^2}\right)^2}{\left(\frac{r(s-t)\sqrt{a^2+s^2-r^2}}{s^2-r^2}\right)^2} +
\dfrac{1}{\left(\frac{r(s-t)}{\sqrt{s^2-r^2}}\right)^2} (x_2^2 + x_3^2 + \cdots + x_{d-1}^2) &=& 1.
\end{eqnarray*}
Hence, the ellipsoid base of the cone whose volume we desire has semiaxes with lengths
$$
 \frac{r(s-t)\sqrt{a^2+s^2-r^2}}{s^2-r^2}, \underbrace{\frac{r(s-t)}{\sqrt{s^2-r^2}},
\frac{r(s-t)}{\sqrt{s^2-r^2}}, \dots, \frac{r(s-t)}{\sqrt{s^2-r^2}}}_{d-2}.
$$
Since the height of the cone is $s-t$, it follows that the volume is
\begin{eqnarray*}
 \frac{1}{d}(s-t)C_{d-1}
\left(\frac{r(s-t)\sqrt{a^2+s^2-r^2}}{s^2-r^2}\right)\left(\frac{r(s-t)}{\sqrt{s^2-r^2}}\right)^{d-2
} \geq \frac{1}{d} C_{d-1} \frac{r^{d-1} (s-t)^d}{(s^2-r^2)^{\frac{d-1}{2}}}.
\end{eqnarray*}

\end{document}